\newcommand*\bigcdot{\mathpalette\bigcdot@{1}}
\newcommand*\bigcdot@[2]{\mathbin{\vcenter{\hbox{\scalebox{#2}{$\m@th#1\bullet$}}}}}
\newcommand{\eqdef}{\mathrel{\mathop:}=}
\newtheorem{theorem}{Theorem}
\newtheorem{definition}{Definition}
\newtheorem{lemma}{Lemma}
\renewcommand*\env@matrix[1][*\c@MaxMatrixCols c]{%
  \hskip -\arraycolsep
  \let\@ifnextchar\new@ifnextchar
  \array{#1}}
\patchcmd{\@maketitle}{\LARGE \@title}{\fontsize{16}{19.2}\selectfont\@title}{}{}
\newsavebox\affbox
\author{Ihyun Nam}
\titleformat{\author}{\normalfont\fontsize{10}{15}\bfseries}{\thesection}{1em}{}
\title{\textbf{\huge }\\
A Survey of Multivariate Polynomial Commitment Schemes \\}
\date{June 2023}
\begin{document}
\renewcommand{\thepage}{\arabic{page}}

\setlength{\parskip}{0.5em}

\maketitle

\begin{center}
    \textbf{Abstract}
\end{center}
A commitment scheme is a cryptographic tool that allows one to commit to a hidden value, with the option to open it later at requested places without revealing the secret itself. Commitment schemes have important applications in zero-knowledge proofs and secure multi-party computation, just to name a few. This survey introduces a few multivariate polynomial commitment schemes that are built from a variety of mathematical structures. We study how Orion is constructed using hash functions; Dory, Bulletproofs, and Vampire using the inner-product argument; Signatures of Correct Computation using polynomial factoring; DARK and Dew using groups of unknown order; and Orion+ using a CP-SNARK. For each protocol, we prove its completeness and state its security assumptions.\footnote{This survey was written as part of my independent undergraduate research in Cryptography at Stanford University (Winter, Spring 2023).}

\pagebreak
{
  \hypersetup{linkcolor=black}
  \tableofcontents
}
\pagebreak

\section{What is a commitment scheme?}
In an interactive protocol involving a prover and a verifier, a commitment scheme is a way in which the prover may prove its knowledge of a statement to the verifier. The simplest way to do so would be for the prover to send the statement directly to the verifier. However, what if the prover wants to keep their knowledge secret or if the statement is simply too large to send directly? Commitment schemes are useful in such cases as they allow the prover to \textit{commit} to the statement and send the resulting commitment instead, which is usually some mathematical structure. Later, the verifier may request the prover to reveal evaluations of the commitment at certain points and provide proof that the claimed evaluations are in fact the correct openings of the commitment. At the end of the interaction, the verifier either accepts or rejects the proof – without actually seeing the statement, ever.
\section{Multivariate polynomial commitments}
A multivariate polynomial commitment scheme is simply a commitment scheme in which the entity that the prover commits to is a polynomial involving more than one variables. A more formal definition follows.

\begin{definition} [Multivariate polynomial commitment scheme] \label{def:generic} A multivariate polynomial commitment scheme denoted by {\fontfamily{cmss}\selectfont PC} is defined by the following tuple of algorithms.

\begin{itemize}
    \item \textbf{{\fontfamily{cmss}\selectfont PC.Setup}}$(\lambda, l, D) \rightarrow$ ({\fontfamily{cmss}\selectfont ck, rk}). On input a security parameter $\lambda$, the number of variables $l \in \mathbb{N}$ in the polynomial, and a maximum degree bound $D \in \mathbb{N}$ of the polynomial, outputs a key pair ({\fontfamily{cmss}\selectfont ck, rk}).
    \item \textbf{{\fontfamily{cmss}\selectfont PC.Commit}}$(\phi,$ {\fontfamily{cmss}\selectfont ck)}$ \rightarrow \mathcal{C}$. On input an $l$-variate polynomial $\phi$ and {\fontfamily{cmss}\selectfont ck}, outputs a commitment $\mathcal{C}$ to $\phi$.
    \item \textbf{{\fontfamily{cmss}\selectfont PC.Open}}$(\phi,$ {\fontfamily{cmss}\selectfont ck}, $\Vec{x}) \rightarrow \pi$. On input {\fontfamily{cmss}\selectfont ck}, the polynomial $\phi$, and an evaluation point $\Vec{x}$ in $l$-dimension, outputs an evaluation proof $\pi$.
    \item {\fontfamily{cmss}\selectfont \textbf{PC.Check}(rk}$, \mathcal{C}, \Vec{x}, \Vec{v}, \pi) \rightarrow \{0,1\}$. On input {\fontfamily{cmss}\selectfont rk}, commitment $\mathcal{C}$, evaluation point $\Vec{x}$, $\Vec{v}$ the alleged evaluation of $\phi$ at $\Vec{x}$, and evaluation proof $\pi$, outputs 1 if the evaluation proof is correct and 0 otherwise.
\end{itemize}
\end{definition}

The details of these algorithms may differ in different implementations of a commitment scheme. For example, some schemes have separate online and offline setup stages while some schemes have none. However, in this survey, all commitment schemes are introduced in this generic model for completeness, except for when an algorithm is missing.

\subsection{Completeness and soundness}
We define the completeness and soundness properties of a (multivariate) polynomial commitment scheme as follows.

\begin{definition}[Completeness \cite{orion}] \label{def:completeness} A commitment scheme is said to be \textit{complete} if for every polynomial $\phi$ and evaluation point $\Vec{x}$, the following probability holds.

\begin{equation*}
Pr\begin{bmatrix}[c|c]
   & \mathcal{C} \leftarrow PC.Commit(\phi) \\
PC.Check(\pi, \Vec{x},y, \mathcal{C}) = 1 & \Vec{x}, y, \pi \leftarrow PC.Open(\phi, \Vec{x}, \mathcal{C})\\
    & y=\phi(\Vec{x})\\
\end{bmatrix} =1.
\end{equation*}

\end{definition}
Completeness states that if a prover is truthful about knowing a statement, then the verifier will \textit{always} accept the prover's proof. The same property is called correctness for argument systems. To define soundness, we first define a negligible function.

\begin{definition} [Negligible \cite{khurana}]
    A function $f:\mathbb{N} \rightarrow \mathbb{N}$ is said to be \textit{negligible} if for every $c \in \mathbb{N}$ there exists  $k_0 \in \mathbb{N}$ such that for all $k > k_0$, $|f(k)| < k^{-c}$ is true.
\end{definition}

\begin{definition}[Soundness] \label{def:soundness} A commitment scheme is said to be \textit{sound} if for every polynomial $\phi$, an evaluation point $\vec{x}$, proof $\pi$, and an efficient adversary $\mathcal{A}$, the following probability holds.    
\begin{equation*}
Pr\begin{bmatrix}[c|c]
   {\fontfamily{cmss}\selectfont PC.Check}(\pi, \Vec{x}, \phi(\Vec{x}), \mathcal{C}) =1 \land & \mathcal{C} \leftarrow {\fontfamily{cmss}\selectfont PC.Commit} (\phi) \\
   {\fontfamily{cmss}\selectfont PC.Check}(\pi', \Vec{x}, \phi(\Vec{x})', \mathcal{C}) =1 \land  & (\phi(\Vec{x}), \phi(\Vec{x})', \pi, \pi') \leftarrow \mathcal{A}(\phi, \Vec{x}, \mathcal{C})\\
   \phi(\Vec{x}) \neq \phi(\Vec{x})' & \\
\end{bmatrix} \leq negl(\lambda).
\end{equation*}

\end{definition}

Soundness states that no efficient adversary can produce two distinct valid openings to a commitment at the same point, except for with negligible probability. In other words, if the prover is falsely claiming to know a statement, then the verifier rejects the proof, except for when the prover actually somehow came up with the correct evaluation at the opening without knowing the statement, with negligible probability.

\subsection{Security properties}
A useful polynomial commitment scheme may also additionally achieve hiding and evaluation binding properties. The \textit{hiding} property states that the distributions of commitments for two distinct messages are computationally indistinguishable. In other words, commitments in a hiding commitment scheme reveal nothing about the committed messages.

\begin{definition}[Hiding \cite{bulletproof}]
    A commitment scheme is said to be \textit{hiding} if for any probabilistic polynomial time (PPT) adversary $\mathcal{A}$,
    \begin{equation*}
        \Bigg | Pr\begin{bmatrix}[c|c]
            & {\fontfamily{cmss}\selectfont pp} \leftarrow \text{{\fontfamily{cmss}\selectfont Setup}}(1^{\lambda})\\
            b=b' & (x_0, x_1) \leftarrow \mathcal{A}({\fontfamily{cmss}\selectfont pp}), b \xleftarrow[]{R} \{0,1\}, r \xleftarrow[]{R}\mathcal{R} \\
            & \mathcal{C} \leftarrow \text{{\fontfamily{cmss}\selectfont Commit}}(x_b;r), b' \leftarrow \mathcal{A}({\fontfamily{cmss}\selectfont pp}, \mathcal{C})
        \end{bmatrix} -\frac{1}{2} \Bigg|\leq negl(\lambda)
    \end{equation*}
    for security parameter $\lambda$.
\end{definition}

\textit{Evaluation binding} states that it must be infeasible for a prover to successfully argue that $f(\Vec{z})=y$ and $f(\Vec{z})=y'$ for a polynomial $f$, evaluation point $\Vec{z}$, and $y\neq y'$. 

\begin{definition}[Binding \cite{bulletproof}]
    A commitment scheme is said to be \textit{binding} if for any PPT adversary $\mathcal{A}$,
    \begin{equation*}
        Pr\begin{bmatrix}[c|c]
            \text{{\fontfamily{cmss}\selectfont Commit}}(x_0;r_0)=\text{{\fontfamily{cmss}\selectfont Commit}}(x_1;r_1) \land x_0 \neq x_1 & {\fontfamily{cmss}\selectfont pp} \leftarrow \text{{\fontfamily{cmss}\selectfont Setup}}(1^{\lambda})\\
            & x_0, x_1, r_0, r_1 \leftarrow \mathcal{A}(\text{{\fontfamily{cmss}\selectfont pp})} 
        \end{bmatrix} \leq negl(\lambda)
    \end{equation*}
    for security parameter $\lambda$.
\end{definition}

This survey states what properties each commitment scheme satisfies, without proving them.
\section{Common concepts and notations}
We first define some concepts and notations that are used uniformly throughout this survey.
\begin{itemize}
    \item \textbf{Vectors.} Vectors are indicated with an overhead arrow i.e. $\Vec{a}.$
    \item \textbf{zk-SNARK.} A zero-knowledge succinct non-interactive argument of knowledge is a cryptographic protocol that allows a prover to prove its knowledge to the verifier without revealing anything about the statement. Many of the commitment schemes discussed in this survey are in fact part of a (zk-)SNARK.
    \item \textbf{Trusted setup.} A setup procedure used to generate a shared secret between a prover and a verifier in a zero-knowledge proof system. A trusted setup destroys the secret after distribution.
    \item \textbf{Prover time, proof size, and verifier time.} These are the three common efficiency majors of a zero-knowledge protocol. \textit{Prover time} is the overhead of the prover to generate the proof, \textit{proof size} is the total communication cost between the prover and the verifier, and \textit{verifier time} is the time the verifier needs to verify evaluations of a commitment \cite{orion}.
    \item \textbf{Fiat-Shamir transformation.} This is a technique used to make an interactive protocol non-interactive. Informally, instead of having the verifier send a random challenge value to the prover in the {\fontfamily{cmss}\selectfont Open} algorithm, the prover samples a challenge value on its own using a random function.
\end{itemize}

\section{Orion: using hash functions}
Orion \cite{orion} is a zero-knowledge argument system that achieves $O(N)$ prover time and $O(\log^2N)$ verifier time based on hash functions. A hash function is a one-way function that maps inputs of arbitrary lengths to hash values of fixed length. While it is easy to compute the hash value of an input, it is computationally difficult to recover an input given its hash value. Orion is plausibly quantum-secure and can be made non-interactive via the Fiat-Shamir transformation.

Orion's novel commitment scheme is based on a linear-time argument presented by Bootle et al. \cite{bcg20} that commits to multilinear polynomials. Most importantly, the commitment scheme performs bit-wise encoding on some message $M$ using the encoding function $E_C$, the details of which are omitted, and a novel \hyperref[def:code switching]{code switching} algorithm $C_{CS}$. 

\begin{itemize}
    \item \textbf{{\fontfamily{cmss}\selectfont PC.Commit$(\phi) \rightarrow \mathcal{R}$}}. On input a multilinear polynomial $\phi$, output a Merkle root commitment $\mathcal{R}$. The coefficients of $\phi$ is $w$.
    \begin{itemize}
        \item Parse $w$ as a $k \times k$ matrix.
        \item Compute the tensor code encoding $C_1$ as a $k \times n$ matrix that encodes each row of $M$ using $E_C$. Compute $C_2$ as a $n \times n$ matrix that encodes each column of $C_1$ using $E_C$. 
        \item For $i\in[n],$ compute the Merkle tree root $Root_i=${\fontfamily{cmss}\selectfont Merkle.Commit}$(C_2[:,i]).$ 
        \item Compute a Merkle tree root $\mathcal{R}=${\fontfamily{cmss}\selectfont Merkle.Commit}$([Root_0, \cdots, Root_{n-1}])$. 
    \end{itemize}
    
    \item \textbf{{\fontfamily{cmss}\selectfont PC.Open}$(\phi, \Vec{x}, \mathcal{R}) \rightarrow c_{\gamma0}, y_{\gamma0}, c_1, y_1, C_1[:,idx], Root_{idx}$}. On input $\phi, \mathcal{R}$, and an evaluation point $\Vec{x}$, the prover additionally has the following values:
    \begin{itemize}
        \item $\gamma_0 \leftarrow $ a random vector from the verifier in $\mathbb{F}^k$ for some $k\in \mathbb{N}$,
        \item $r= r_0 \otimes r_1 \leftarrow$ evaluation point $\Vec{x}$ parsed as a tensor product,
        \item $y \leftarrow \phi(\Vec{x}$), and
        \item $\hat{I}, I \leftarrow t\in[n]$ indices randomly sampled by the verifier.
    \end{itemize}
    The prover computes the following: 
    \begin{itemize}
        \item $c_{\gamma0}=\sum_{i=1}^{k=1}\gamma0[i]C_1[i]$
        \item $y_{\gamma0}=\sum_{i=0}^{k-1}\gamma0[i]w[i]$
        \item $c_1=\sum_{i=0}^{k-1}r_0[i]C_1[i]$
        \item $y_1=\sum_{i=0}^{k-1}r_0[i]w[i]$
        \item For $idx \in \hat{I}$, compute and send to verifier $C_1[:,idx]$ and the Merkle tree proof of $Root_{idx}$ for $C_2[:,idx]$ under $\mathcal{R}$.
    \end{itemize}
    The prover additionally sends the following to the verifier:
    \begin{itemize}
        \item $R_{c_1}:=$ {\fontfamily{cmss}\selectfont Merkle.Commit$(c_1)$}
        \item $R_{\gamma_0}:=$ {\fontfamily{cmss}\selectfont Merkle.Commit$(c_{\gamma_0})$}
        \item $y := \langle y_1, r_1 \rangle$
        \item Output of $C_{CS}: C_2[I[j]], idx], c_2[I[j]], c_{\gamma0}[I[j]]$ for all $0<j<|\hat{I}|$ and all $idx \in \hat{I}$
        \item $\pi \leftarrow$ proof of $C_{CS}$
    \end{itemize}
    
    \item \textbf{{\fontfamily{cmss}\selectfont PC.Check}}$(\pi, \Vec{x}, y, \mathcal{R}) \rightarrow \{0,1\}.$ On input $\pi, \Vec{x}, y, $ and $\mathcal{R}$, output 1 if all checks below pass and 0 otherwise. For $idx \in \hat{I}$, check the following:
    
    \begin{itemize}
        \item $c_{\gamma0}[idx]\stackrel{?}{=}\langle \gamma0, C_1[:,idx]\rangle$
        \item $E_C(y_{\gamma0})\stackrel{?}{=}c_{\gamma0}$
        \item $c_1[idx]\stackrel{?}{=}\langle r_0, C_1[:,idx]\rangle $
        \item $E_C(y_1)\stackrel{?}{=}c_1$
        \item $y\stackrel{?}{=}\langle r_1,y_1\rangle$
        \item $E_C(C_1[:,idx]) \text{ consistent with } Root_{idx} \text{ and its Merkle tree proof.}$
    \end{itemize}
\end{itemize}

Code switching is an efficient instantiation of a proof composition technique introduced by Rothblum and Ron-Zewi \cite{rr20}. Code switching reduces the proof size of the commitment scheme introduced by Bootle et al. \cite{bcg20} and Brakedown \cite{brakedown} from $O(\sqrt{N})$ to $O(\log^2N)$. Orion calls a pre-defined zero-knowledge argument protocol on $C_{CS}$ and uses the outputs of the algorithm in {\fontfamily{cmss}\selectfont Check}. The code switching algorithm is defined as follows.
    \begin{algorithm}[H]
    \caption*{\textbf{The code switching algorithm $(C_{CS})$ \cite{orion}}\label{def:code switching}} \label{alg:codeswitching}
	\begin{algorithmic}[1]
            \State Encode $c_{\gamma0}\coloneqq E_C(y_{\gamma0}), c_1 \coloneqq E_C(y_1)$
            \For {idx $\in I$}
            \State Encode $C_2[:,idx] \coloneqq E_C(C_1[:,idx)$
            \For{idx $\in \hat{I}$}
            \State Check $c_{\gamma0} \stackrel{?}{=} \langle \gamma0, C_1[:,idx]\rangle $
            \State Check $c_1[idx]\stackrel{?}{=}\langle r_0,C_1[:,idx]\rangle $
    
            \State Check $\langle r_1,y_1\rangle \stackrel{?}{=}y$
    
            \For{$0\leq j\langle |I|$}
            \State Output $c_1[I[j]],c_{\gamma0}[I[j]]$
            \For{idx $\in I$}
            \State Output $C_2[I[j],idx]$
            \EndFor
            \EndFor
            \EndFor
            \EndFor
        \end{algorithmic} 
    \end{algorithm}

\subsection{Correctness and security of Orion} \label{def:orion security}
\begin{theorem} Orion is a polynomial commitment that achieves correctness as defined in \hyperref[def:completeness]{Definition 2.2}.
\end{theorem}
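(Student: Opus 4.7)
The plan is to show that each of the six equality checks in PC.Check holds identically when the prover is honest, so that the verifier accepts with probability $1$. Since PC.Check performs no further randomization beyond the challenges $\gamma_0$, $r = r_0 \otimes r_1$, and the index sets $\hat{I}, I$ already fixed in PC.Open, the correctness claim reduces to verifying a list of algebraic and structural identities.

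First I would dispose of the two ``pointwise'' checks by unfolding definitions. Because $c_{\gamma_0} = \sum_{i} \gamma_0[i]\, C_1[i]$ is defined as a linear combination of the rows of $C_1$, reading off its $idx$-th coordinate gives $c_{\gamma_0}[idx] = \langle \gamma_0, C_1[:,idx] \rangle$, which is check $1$; the same argument with $r_0$ in place of $\gamma_0$ gives check $3$. Next, for checks $2$ and $4$, I would invoke the fact that $E_C$ is a linear code encoding. Since the prover set $C_1[i] = E_C(w[i])$ and $y_{\gamma_0} = \sum_i \gamma_0[i]\, w[i]$, linearity gives $E_C(y_{\gamma_0}) = \sum_i \gamma_0[i]\, E_C(w[i]) = c_{\gamma_0}$, and analogously $E_C(y_1) = c_1$.

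For check $5$ I would use the tensor structure of the evaluation point. Parsing $w$ as a $k\times k$ matrix, the multilinear polynomial evaluates as $\phi(\Vec{x}) = \sum_{i,j} w[i][j]\, r_0[i]\, r_1[j]$. Substituting $y_1[j] = \sum_i r_0[i]\, w[i][j]$ yields $\langle r_1, y_1 \rangle = \sum_j r_1[j] \sum_i r_0[i]\, w[i][j] = \phi(\Vec{x}) = y$, as required. Check $6$ follows because $Root_{idx}$ was computed honestly as \textsf{Merkle.Commit}$(C_2[:, idx])$ with $C_2[:, idx] = E_C(C_1[:, idx])$ by construction of the tensor code, so the supplied Merkle path verifies by the correctness of the Merkle commitment. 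The output of $C_{CS}$ and its proof $\pi$ are produced by running the subprotocol on honestly computed inputs, so completeness of that subprotocol delivers the remaining verifier acceptance.

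The only delicate step is check $5$, where one must be careful to match the convention used to parse $\Vec{x}$ as $r_0 \otimes r_1$ against the row/column indexing of $w$ in the commitment; once the indexing is fixed, the identity is a one-line tensor computation. The remaining checks are essentially bookkeeping: each follows either from linearity of $E_C$, from the definition of a coordinate of a linear combination, or from the correctness of the Merkle commitment invoked as a black box.
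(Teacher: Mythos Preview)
Your proof is correct and in fact strictly more thorough than the paper's own argument. The paper's proof addresses only the final check (Merkle consistency of $E_C(C_1[:,idx])$ with $Root_{idx}$), arguing exactly as you do that $E_C(C_1[:,idx]) = C_2[:,idx]$ by construction of the tensor code and that the Merkle path therefore verifies. It leaves checks~1--5 entirely implicit. Your treatment supplies the missing verifications: checks~1 and~3 from the definition of a coordinate of a row-linear combination, checks~2 and~4 from linearity of the encoding $E_C$, and check~5 from the tensor decomposition $\phi(\Vec{x}) = \sum_{i,j} w[i][j]\, r_0[i]\, r_1[j]$. Both arguments are ultimately ``by construction,'' but yours actually walks through each verifier equation, which is what Definition~2.2 demands. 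Your caveat about matching the $r_0\otimes r_1$ convention to the row/column indexing of $w$ is well taken and is the only place where any real care is needed.
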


\begin{proof}
    For all $idx \in \hat{I}$, we have $E_C(C_1[:,idx]) = C_2[:,idx]$. {\fontfamily{cmss}\selectfont PC.Commit} computes \\{\fontfamily{cmss}\selectfont Merkle.Commit}$(E_C(C_1[:,idx]))$ as ${\fontfamily{cmss}\selectfont Merkle.Commit}(C_2[:,idx])$, which is consistent with $Root_{idx}$. The prover sends the Merkle tree proof of $Root_{idx}$ for $C_2[:,idx]$ under $\mathcal{R}$ to verifier, which is correct by construction.
\end{proof}

The security of Orion relies on the hiding property of the Merkle hash function. The {\fontfamily{cmss}\selectfont Merkle.Commit} algorithm is used in Orion to compute the Merkle tree roots $Root_i$ as a commitment to the tensor code encoding of the message $M$ and also to compute the final commitment $\mathcal{R}$. Because the hash algorithm is considered one-way, given an output $y$ to {\fontfamily{cmss}\selectfont Merkle.Commit} (denote this hash function as $H$) it is infeasible for an adversary to find $M$ such that $y=H(M)$. For simplicity the manipulation of $M$ using $E_C$ and double tensor-encoding is omitted in the statement $H(M)$.

\section{Dory: using generalized inner-product arguments}\label{def:dory}
Dory \cite{dory} is a transparent setup, public-coin interactive argument for proving completeness of an inner-pairing product. Importantly, Dory works on vector commitments involving elements from two distinct groups. The inner-product argument was initially suggested by Bootle et al. \cite{ipa} and was used to construct zero-knowledge proofs for arithmetic circuit satisfiability with efficient communication. The inner-product argument takes as inputs two independent generators $G_1, G_2 \in \mathbb{G}^n$, a scalar $c \in \mathbb{Z}_p$, and $P\in \mathbb{G}.$ A prover in this argument system wants to convince a verifier that the prover knows $\Vec{a}, \Vec{b} \in \mathbb{Z}_p^n$ such that
\[\{(G_1, G_2, P, c; \Vec{a}, \Vec{b}): P=G_1^{\Vec{a}} G_2^{\Vec{b}} \land c=\langle \Vec{a}, \Vec{b}\rangle\}.\]
Here, $P$ is the vector commitment to $\Vec{a}, \Vec{b}.$ A full tuple of algorithms for an inner-product commitment scheme is explained in \hyperref[def:appenA]{Appendix A.}

\begin{definition} [Generalized inner-product arguments]
    A generalized inner-product argument \\(GIPA) is a generalization of the inner-product argument. Under GIPA, a prover can additionally prove inner-products between bilinear pairings, not only field elements.
\end{definition}

Additionally, Dory composes Pedersen and AFGHO commitments to commit to matrices using two-tiered homomorphic commitments proposed by Groth \cite{groth_2011}.

\begin{definition} [Pedersen commitment \cite{bulletproof}\label{def:pedersen}] The Pedersen commitment is a linearly homomorphic scheme that defines a function $M_{pp} \times R_{pp} \rightarrow C_{pp}$ for message space $M_{pp}=\mathbb{Z}_p$, randomness space $R_{pp}=\mathbb{Z}_p$, and commitment space $C_{pp}=\mathbb{G}_p$. The {\fontfamily{cmss}\selectfont Setup} algorithm outputs $g,h \overset{{\scriptscriptstyle\R}}{\leftarrow} \mathbb{G}$ and {\fontfamily{cmss}\selectfont Commit} outputs $g^xh^r$.
\end{definition}

\begin{definition} [AFGHO commitment \cite{abe}]
     Given a commitment key $(v_0, v_1) \in \mathbb{G}_2$ the commitment to $(A_0, A_1) \in \mathbb{G}_1^2$ is given by $e(A_0, v_0)e(A_1, v_1)$.
\end{definition}

\subsection{Using {\fontfamily{cmss}\selectfont Dory-PC-RE} with {\fontfamily{cmss}\selectfont EVAL-VMV-RE}}
There are several sequential implementations of Dory, each with some more properties than the other. In this survey, we discuss {\fontfamily{cmss}\selectfont Dory-PC-RE}, which is an honest-verifier, statistical zero-knowledge, random evaluation extractable polynomial commitment scheme for $r$-variable multilinear polynomials, using a commitment to matrices along with the Pedersen commitment. {\fontfamily{cmss}\selectfont Dory-PC-RE} uses {\fontfamily{cmss}\selectfont EVAL-VMV-RE} to evaluate commitments. The resulting commitment scheme achieves linear prover time and logarithmic verifier time.

    Take $\mathcal{X}=\mathbb{F}^{n \times m}$ and $M_{ij}\in\mathcal{X}$. Then,
    \begin{itemize}
        \item {\fontfamily{cmss}\selectfont \textbf{Setup}}$(1^\lambda)\rightarrow {\text{\fontfamily{cmss}\selectfont pp}.}$ On input the security parameter $\lambda$, output $ \text{\fontfamily{cmss}\selectfont pp} := (\Gamma_1 \overset{{\scriptscriptstyle\R}}{\leftarrow} \mathbb{G}_1^m, H_1 \overset{{\scriptscriptstyle\R}}{\leftarrow} \mathbb{G}_1, \Gamma_2 \overset{{\scriptscriptstyle\R}}{\leftarrow} \mathbb{G}_2^n, H_2 \overset{{\scriptscriptstyle\R}}{\leftarrow} \mathbb{G}_2)$
        \item {\fontfamily{cmss}\selectfont \textbf{Commit}}$(\text{{\fontfamily{cmss}\selectfont pp}}; M_{ij})\rightarrow (\mathcal{C}, \mathcal{S}).$ On input {\fontfamily{cmss}\selectfont pp} and the $ij$-th entry in the message $M_{ij}$, output $(\mathcal{C}, \mathcal{S}) :=  (\mathcal{C}, (r_{rows}, r_{fin}, \Vec{V}))$ for 
        $\begin{cases}
                 r_{rows} \overset{{\scriptscriptstyle\R}}{\leftarrow} \mathbb{F}^n \\
                 r_{fin} \overset{{\scriptscriptstyle\R}}{\leftarrow} \mathbb{F} \\
                 H_T \leftarrow e(H_1, H_2) \\
                 V_i \leftarrow \text{{\fontfamily{cmss}\selectfont Commit}}_{\text{Pedersen}}((\Gamma_1, H_1); M_{ij}, r_{rows,i} \\
                 \mathcal{C} \leftarrow \text{{\fontfamily{cmss}\selectfont Commit}}_{\text{AFGHO}}((\Gamma_2, H_T); \Vec{V}, r_{fin}) \\ 
            \end{cases}$
        \item {\fontfamily{cmss}\selectfont \textbf{Check}}$(cc; \mathcal{C}, M, \mathcal{S}) \rightarrow \{0,1\}$. The verifier outputs \\$\begin{cases}
        1 & \text{if } \mathcal{C} = \sum_i \Gamma_{2i}\bigg(\sum_j M_{ij}\Gamma_{1j}+r_{rows,i}H_1\bigg)+r_{fin}\cdot e(H_1, H_2)\\
        0 & \text{if otherwise}\end{cases}$
    \end{itemize}

To see what happens in {\fontfamily{cmss}\selectfont Dory-PC-RE}, we need a bit more information on the Pedersen commitment. For $n=2^m$, the Pedersen commitment for $\mathbb{F}$ with public parameters $pp_\mathbb{F}$, and the matrix commitment for $\mathbb{F}^{n \times n}$ with public parameters $pp_{\mathbb{F}^{n \times n}}$ is such that:
\begin{align*}
    (\mathcal{C}_M, \mathcal{C}_y, \Vec{L}, \Vec{R}) &\in \mathcal{L}_{VMV, n, pp_{\mathbb{F}}^{n\times n}, pp_\mathbb{F}} \subset \mathbb{G}_T \times \mathbb{G}_1 \times \mathbb{F}^n \times \mathbb{F}^n \\
    &\Leftrightarrow \exists (M \in \mathbb{F}^{n\times n}, y\in \mathbb{F}, \mathcal{S}_M, \mathcal{S}_y):y = \Vec{L}^TM\Vec{R},\\
    & \hspace{1cm} \text{such that {\fontfamily{cmss}\selectfont Open(pp}}_{\mathbb{F}}^{n\times n}, \mathcal{C}_M, M, \mathcal{S}_M)=1, \text{{\fontfamily{cmss}\selectfont Open(pp}}_\mathbb{F}, \mathcal{C}_y, y, \mathcal{S}_y)=1. 
\end{align*}
 In {\fontfamily{cmss}\selectfont Dory-PC-RE}, fix Pedersen commitment parameters as {\fontfamily{cmss}\selectfont pp}$_\mathbb{F}=(\Gamma_{1, fin} \overset{{\scriptscriptstyle\R}}{\leftarrow} \mathbb{G}_1^{2m}\times \mathbb{G}_1)$, and matrix commitment parameters as {\fontfamily{cmss}\selectfont pp}$_{\mathbb{F}^{n\times n}}=(\Gamma_{1,0}\overset{{\scriptscriptstyle\R}}{\leftarrow} \Gamma_2^{2m}\times \mathbb{G}_2)$. The prover wants to prove that $(T, y_{com}, \Vec{L}, \Vec{R}) \in \mathcal{L}_{VMV,n,pp_{\mathbb{F}^{n\times n}}, pp_{\mathbb{F}}}$. Here, we have $y_{com}=y\Gamma_{1,fin}$ as the commitment to $y=\Vec{L}^TM\Vec{R}$ and $T$ as a hiding commitment to $\Vec{T}' \in \mathbb{G}_1^n$.
 
Once the prover computes the vector $\Vec{v}=\Vec{L}^TM$, we see that by construction $y=\Vec{L}^TM\Vec{R}=\langle \Vec{v},\Vec{R}\rangle .$ Also, it follows from the linear homomorphism of Pedersen commitments that \[v_{com} \coloneqq \langle \Vec{L}, C'\rangle =\text{{\fontfamily{cmss}\selectfont Commit}}_{\Gamma_{1,0}}(\Vec{v})\] is a hiding commitment to $\Vec{v}$. Therefore, in {\fontfamily{cmss}\selectfont EVAL-VMV-RE}, it suffices for the prover to prove knowledge of $\Vec{T}' \in \mathbb{G}_1^n$ and $\Vec{v}\in \mathbb{F}^n$ such that
 \begin{align*}
     T&=\langle \Vec{T}', \Gamma_2\rangle  \\
     \langle \Vec{L}, \Vec{T}'\rangle  &= \langle \Vec{v},\Gamma_1\rangle  \\
     y_{com} &= \langle \Vec{v}, \Vec{R}\rangle \Gamma_{1,fin}.
 \end{align*} We see that this satisfies correctness by construction. Some implementation of Dory also satisfies the Honest-Verifier Statistical Zero-Knowledge properties by achieving hiding.
 
\subsection{Security of Dory}
Dory relies on the Symmetric eXternal Diffie-Hellman assumption (SXHD) in the standard model for security.
\begin{definition}[SXDH]
    For $(\mathbb{F}_p, \mathbb{G}_1, \mathbb{G}_2, \mathbb{G}_T, e, G_1, G_2)$ defined as above, the decisional Diffie-Hellman (DDH) assumption holds for $(\mathbb{F}_p, \mathbb{G}_1, G_1)$ and $(\mathbb{F}_p, \mathbb{G}_2, G_2)$. In any group, DDH implies DLOG, double pairing assumption, and reverse double pairing assumption defined below.
\end{definition}

\begin{definition}[Double pairing assumption \cite{dory}]
    For $(\mathbb{F}_p, \mathbb{G}_1, \mathbb{G}_2, \mathbb{G}_T, e, G_1, G_2)$ defined as above, given $A_1, A_2 \xleftarrow[]{R}\mathbb{G}_1$ no non-uniform PPT adversary can compute non-trivial $B_1, B_2 \in \mathbb{G}_2$ such that $A_1B_1+A_2B_2=0$. Similarly, given $A_1, A_2 \xleftarrow[]{R}\mathbb{G}_2$ no adversary can compute non-trivial $B_1,B_2\in \mathbb{G}_1$ such that $B_1A_1+B_2A_2=0.$
\end{definition}

\begin{definition}[Reverse double pairing assumption\cite{dory}]
    For $(\mathbb{F}_p, \mathbb{G}_1, \mathbb{G}_2, \mathbb{G}_T, e, G_1, G_2)$ defined as above and $n=poly(\lambda)$, given $\Vec{A} \xleftarrow[]{R}\mathbb{G}_1^n$ no non-uniform PPT adversary can compute a non-trivial $\Vec{B}\in \mathbb{G}_2^n$ such that $\langle \Vec{A}, \Vec{B}\rangle=0$. Similarly, given $\Vec{A} \xleftarrow[]{R}\mathbb{G}_2^n$, no adversary can compute non-trivial $\Vec{B}\in \mathbb{G}_2^1$ such that $\langle \Vec{B}, \Vec{A}\rangle=0$.
\end{definition}

\section{Bulletproofs: using improved inner-product arguments}
Bulletproofs \cite{bulletproof} is a non-interactive zero-knowledge proof protocol that does not require a trusted setup. While Bulletproofs can be used to prove a variety of statements on a committed value, this survey focuses on how Bulletproofs can yield a range proof. That is, proving that a commitment $V$ contains a number $v$ that lies in some range, without revealing $v$. The proof size is logarithmic in the size of the witness. Bulletproofs have especially important applications to secret monetary transactions on distributed and trustless blockchains.

\subsection{Improved inner-product argument}
This is a modification of the \hyperref[def:dory]{inner-product argument} introduced by Bootle et al.\cite{bcg20}. Bulletproofs build a proof system for the relation:
\[\{(\Vec{g}, \Vec{h},u, P; \Vec{a}, \Vec{b}\in \mathbb{Z}_p^n): P=\Vec{g}^{\Vec{a}} \Vec{h}^{\Vec{b}}\cdot u^{\langle \Vec{a}, \Vec{b}\rangle}\}\hspace{0.5cm} (*).\] This is a special case of the original inner-product argument when $c=\langle \Vec{a}, \Vec{b}\rangle$ is given as part of the vector commitment $P$. Bulletproofs construct a proof protocol for (*) with $2\log_2(n)$ proof size, as opposed to a naive solution with $2n$ proof size when the prover directly sends $\Vec{a}, \Vec{b}\in \mathbb{Z}_p^n$ to the verifier. Importantly for $n>1$, {\fontfamily{cmss}\selectfont Open} function, Bulletproofs have the prover and the verifier engage in an following recursive inner-product argument for $P'$.
\\To see how, first define a hash function $H:\mathbb{Z}_p^{2n+1}\rightarrow \mathbb{G}$ that takes in as inputs $\Vec{a}, \Vec{a}', \Vec{b}, \Vec{b}' \in \mathbb{Z}_p^{n'}$ and $c\in \mathbb{Z}_p$, and outputs
\[H(\Vec{a}, \Vec{a}', \Vec{b}, \Vec{b}, c)=g_{[:n']}^a \cdot g_{[:n']}^{a'} \cdot h_{[:n']}^b \cdot h_{[:n']}^{b'}\cdot u^c \in \mathbb{G}.\]
Here, the notation $\Vec{a}_{[:n]}$ denotes slices of vectors $(a_1, \cdots, a_n) \in \mathbb{F}^n$. Bulletproofs implement an \textit{improved} inner-product argument that reduces the overall proof size by a factor of 3. The following tuple of algorithms defines the relation $(*)$, and therefore also the proof system for Bulletproofs.

\begin{itemize}
    \item {\fontfamily{cmss}\selectfont Step 1.} Prover computes the following and send $L, R$ to the verifier.
    \begin{itemize}
        \item $n' = \frac{n}{2}$
        \item $c_L=\langle \Vec{a}_{[:n']}, \Vec{b}_{[n':]}\rangle$
        \item $c_R=\langle \Vec{a}_{[n':]}, \Vec{b}_{[:n']}\rangle$
        \item $L=\Vec{g}_{[n':]}^{\Vec{a}_{[:n']}}\Vec{h}_{[:n']}^{\Vec{b}_{[n':]}}u^{c_L}$
        \item $R=\Vec{g}_{[:n']}^{\Vec{a}_{[n':]}}\Vec{h}_{[n':]}^{\Vec{b}_{[:n']}}u^{c_R}$
    \end{itemize}
    
    \item {\fontfamily{cmss}\selectfont Step 2.} Verifier randomly samples $x$ from $\mathbb{Z}_p^{\star}$ and sends it to the prover.
    \item {\fontfamily{cmss}\selectfont Step 3.} Now that they both have $x$, the prover and the verifier compute
    \begin{itemize}
        \item $\Vec{g'}=\Vec{g}_{[:n']}^{x^{-1}}\circ \Vec{g}_{[n':]}^x$
        \item $\Vec{h'}=\Vec{h}_{[:n']}^x\circ \Vec{h}_{[n':]}^{x^{-1}}$
        \item $P'=L^{x^2}PR^{x^{-2}}$
    \end{itemize}
    \item {\fontfamily{cmss}\selectfont Step 4.} Prover computes 
    \begin{itemize}
        \item $\Vec{a'}=\Vec{a}_{[:n']}\cdot x+ \Vec{a}_{[n':]}\cdot x^{-1}$
        \item $\Vec{b'}=\Vec{b}_{[:n']}\cdot x^{-1}+ \Vec{b}_{[n':]}\cdot x$
    \end{itemize}
    \item {\fontfamily{cmss}\selectfont Step 5.} Verifier outputs 1 if
    \[P'=\bigg(\Vec{g}_{[:n']}^{x^{-1}} \circ \Vec{g}_{[n':]}^x\bigg)^{\Vec{a'}} \cdot \bigg(\Vec{h}_{[:n']}^x \circ \Vec{h}_{[n':]}^{x^{-1}}\bigg)^{\Vec{b'}}\cdot u^{\langle \Vec{a'}, \Vec{b'}\rangle}\] and 0 otherwise.
    \item  {\fontfamily{cmss}\selectfont Step 5.} Repeat the protocol recursively on input $(\Vec{g'}, \Vec{h'}, u, P'; \Vec{a'}, \Vec{b'})$ so that the prover ends up sending $(L_1, R_1), \cdots, (L_{\log_2n}, R_{\log_2n}), (a,b).$
    
\end{itemize}

\subsection{Range proof using Bulletproofs}
We introduce an example of a range proof using Bulletproofs using insights from a \hyperref[def:pedersen]{Pedersen commitment}. Curious readers should read the original Bulletproofs paper to see how they can also be used to make zero-knowledge proofs for arithmetic circuits. The proof system for a range proof proves the following relation:
\[(g,h \in \mathbb{G}, V,n;v,\gamma\in\mathbb{Z}_p:V=h^{\gamma}g^v\land v\in[0,2^{-1}]) \hspace{1cm} (*)\]
for a group $\mathbb{G}$ and a Pedersen commitment $V\in\mathbb{G}$ to $v\in[0, 2^n-1]$ with randomness $\gamma$.
\begin{itemize}
    \item \textbf{{\fontfamily{cmss}\selectfont Setup}}$(a_L, a_R)$
    \item \textbf{{\fontfamily{cmss}\selectfont Commit$(\Vec{a}_L, \Vec{a}_R)\rightarrow A$.}} On input vectors $\Vec{a}_L\in \{0,1\}^n$ such that $\langle \Vec{a}_L, \Vec{2}^n\rangle=v$ and $\Vec{a}_R=\Vec{a}_L-\Vec{1}^n \in \mathbb{Z}_p^n$, output the commitment $A=h^{\alpha} \Vec{g}^{\Vec{a}_L}\Vec{h}^{\Vec{a}_R}\in \mathbb{G}$ for $\alpha$ randomly sampled from $\mathbb{Z}_p$. Also randomly sample 'blinding vectors' $\Vec{s}_L, \Vec{s}_R$ from $\mathbb{Z}_p^n$ and $\rho$ from $\mathbb{Z}_p$. Compute $S=h^{\rho}\Vec{g}^{\Vec{s}_L}\Vec{h}^{\Vec{s}_R} \in \mathbb{G}$. Send $A,S$ to the verifier.
    \item \textbf{{\fontfamily{cmss}\selectfont Open$(y,z)$}}. Verifier randomly samples $y,z\in\mathbb{Z}_p$ and sends them to the prover. At this point, the protocol defines the following polynomials:
    \begin{itemize}
        \item $l(X) = (\Vec{a}_L-z\cdot \Vec{1}^n)+\Vec{s}_L\cdot X \in \mathbb{Z}_p^n[X]$
        \item $r(X)=\Vec{y}^n\circ (\Vec{a}_R+z\cdot \Vec{1}^n+\Vec{s}_R\cdot X)+z^2\cdot \Vec{2}^n\in \mathbb{Z}_p^n[X]$
        \item $t(X)=\langle l(X),r(X)\rangle=t_0+t_1\cdot X+t_2\cdot X^2 \in \mathbb{Z}_p[X]$
    \end{itemize}
    The prover's job is reduced to convincing the verifier that $t_0=v\cdot z^2+\delta(y,z)$ where $\delta(y,z)=(z-z^2)\cdot\langle \Vec{1}^n,\Vec{y}^n\rangle-z^3\langle\Vec{1}^n,\Vec{2}^n \rangle \in \mathbb{Z}_p.$ To do so, it goes through another cycle of \textbf{{\fontfamily{cmss}\selectfont Commit$_2$}} and \textbf{{\fontfamily{cmss}\selectfont Open$_2$}}.
    \item \textbf{{\fontfamily{cmss}\selectfont Commit$_2(t_1, t_2)\rightarrow T_1, T_2$.}} On input $t_1, t_2$ and $\tau_1, \tau_2$ sampled randomly from $\mathbb{Z}_p$, output the commitments $T_i=g^{t_i}h^{\tau_i}$ for $i=\{1,2\}.$ Prover sends $T_1, T_2$ to the verifier.
    \item \textbf{{\fontfamily{cmss}\selectfont Open}$_2(x) \rightarrow \tau_x, \mu, \hat{t}, \Vec{l}, \Vec{r}.$} On input a random challenge $x \in \mathbb{Z}_p^{\star}$ from the verifier, the prover outputs the evaluation of $l$ and $r$ at $x$ along with a few others as follows.
    \begin{itemize}
        \item $\Vec{l}=l(x)=\Vec{a}_L-z \cdot \Vec{1}^n+\Vec{s}_L\cdot x$
        \item $\Vec{r}=r(x)=\Vec{y}^n \circ (\Vec{a}_R+z \cdot \Vec{1}^n +\Vec{s}_R\cdot x)+z^2\cdot \Vec{2}^n$
        \item $\hat{t}=\langle \Vec{1}, \Vec{r}\rangle $
        \item $\tau_x=\tau_2 \cdot x^2+\tau_1\cdot x+z^2\cdot \gamma$
        \item $\mu=\alpha+\rho\cdot x$
    \end{itemize}
    \item \textbf{{\fontfamily{cmss}\selectfont Check$(\Vec{l}, \Vec{r}, \tau_x, \mu, \hat{t}) \rightarrow \{0, 1\}$}}. On input $\Vec{l}, \Vec{r}, \tau_x, \mu, \hat{t}$, the verifier outputs 1 if all the checks below pass and 0 otherwise.
    \begin{itemize}
        \item $g^{\hat{t}}h^{\tau_x} \stackrel{?}{=} V^{z^2}\cdot g^{\delta(y,z)}\cdot T_1^x \cdot T_2^{x^2}$
        \item $A \cdot S^x \cdot \Vec{g}^{-z}\cdot (\Vec{h'})^{z\cdot \Vec{y}^n+z^2\cdot \Vec{2}^n} \stackrel{?}{=} h^{\mu}\cdot \Vec{g}^{\Vec{l}}\cdot (\Vec{h'})^{\Vec{r}}$
        \item $\hat{t} \stackrel{?}{=} \langle \Vec{l}, \Vec{r}\rangle$
    \end{itemize}
\end{itemize}
This can be made into a non-interactive proof through Fiat-Shamir transformation and additionally achieve knowledge soundness.

\subsection{Completeness and security of Bulletproofs}
\begin{theorem}
    Bulletproofs achieves completeness as defined in \hyperref[def:completeness]{Definition 2.2}.
\end{theorem}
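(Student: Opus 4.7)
The plan is to verify each of the three equations checked by the verifier in \textbf{{\fontfamily{cmss}\selectfont PC.Check}} by expanding both sides using the protocol's definitions and showing they coincide whenever the prover is honest. Since all checks take place in a single group $\mathbb{G}$ with the fixed generators $g, h, \Vec{g}, \Vec{h}$, each equality reduces to an identity among exponents that I can verify directly.

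The third check $\hat{t} \stackrel{?}{=} \langle \Vec{l}, \Vec{r}\rangle$ is immediate from $\Vec{l} = l(x)$, $\Vec{r} = r(x)$ and $t(X) = \langle l(X), r(X)\rangle$, once $\hat{t}$ is read as $t(x)$. For the first check, I would expand $V^{z^2} g^{\delta(y,z)} T_1^x T_2^{x^2}$ using $V = h^{\gamma} g^{v}$, $T_i = g^{t_i} h^{\tau_i}$, and $\tau_x = z^2\gamma + \tau_1 x + \tau_2 x^2$, then collect the $g$- and $h$-exponents separately. The $h$-exponent is exactly $\tau_x$ by construction, so matching the right-hand side against $g^{\hat{t}} h^{\tau_x}$ reduces to the key identity $t_0 = v z^2 + \delta(y,z)$.

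This identity is what I expect to be the main obstacle. It requires expanding $t_0 = \langle \Vec{a}_L - z \Vec{1}^n,\, \Vec{y}^n \circ (\Vec{a}_R + z \Vec{1}^n) + z^2 \Vec{2}^n\rangle$ and invoking two witness-specific facts: first, $\Vec{a}_L \circ \Vec{a}_R = \Vec{0}$, because $\Vec{a}_L \in \{0,1\}^n$ and $\Vec{a}_R = \Vec{a}_L - \Vec{1}^n$, which annihilates the cross term $\langle \Vec{a}_L, \Vec{y}^n \circ \Vec{a}_R\rangle$; second, $\langle \Vec{a}_L, \Vec{2}^n\rangle = v$, which produces the $v z^2$ term. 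A short bookkeeping then assembles the remaining terms into $\delta(y,z) = (z - z^2)\langle \Vec{1}^n, \Vec{y}^n\rangle - z^3 \langle \Vec{1}^n, \Vec{2}^n\rangle$, exactly as required.

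For the second check, I would substitute the definitions of $A, S, \mu$, rewrite $\Vec{h}$ in the $\Vec{h}'$ basis via $\Vec{h}_i = (\Vec{h}'_i)^{y^{i-1}}$ so that both sides share common bases, and then verify exponent-by-exponent. The $h$-exponent collapses to $\alpha + \rho x = \mu$, the $\Vec{g}$-exponent to $\Vec{a}_L - z \Vec{1}^n + \Vec{s}_L x = \Vec{l}$, and the $\Vec{h}'$-exponent to $\Vec{y}^n \circ (\Vec{a}_R + z \Vec{1}^n + \Vec{s}_R x) + z^2 \Vec{2}^n = \Vec{r}$. Finally, completeness of the recursive improved inner-product argument from the previous subsection can be invoked to discharge the final compressed commitment check, concluding the proof.
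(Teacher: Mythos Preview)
Your proposal is correct but targets a different part of Bulletproofs than the paper's own proof. The paper's proof addresses only the recursive step of the improved inner-product argument from Section~6.1: it verifies that for an honest prover the folded commitment satisfies $P' = L^{x^2}\cdot P\cdot R^{x^{-2}} = (\Vec{g}')^{\Vec{a}'}(\Vec{h}')^{\Vec{b}'}u^{\langle \Vec{a}',\Vec{b}'\rangle}$, by expanding both sides through the hash function $H$ and matching the slot arguments. Your proposal, by contrast, treats the inner-product argument's completeness as a black box and instead verifies the three final checks of the range proof in Section~6.2, with the core identity $t_0 = vz^2 + \delta(y,z)$ derived from $\Vec{a}_L\circ\Vec{a}_R = \Vec{0}$ and $\langle \Vec{a}_L,\Vec{2}^n\rangle = v$. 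Both are legitimate readings of ``Bulletproofs achieves completeness''; yours is arguably more complete for the range-proof application, since it covers all of {\fontfamily{cmss}\selectfont Check}, while the paper's is the minimal core lemma needed to justify the recursion that you invoke at the end. If you want alignment with the paper, replace your final sentence's black-box appeal with the explicit $P' = L^{x^2}PR^{x^{-2}}$ expansion via $H$; conversely, the paper's proof says nothing about the three range-proof equalities you carefully dispatch.
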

\begin{proof}
    We want to show that $L^{(x^2)}\cdot P\cdot R^{(x^2)}\stackrel{?}{=}H(x^{-1}\Vec{a'}, x\Vec{a'}, x\Vec{b'}, x^{-1}\Vec{b'}, \langle \Vec{a'},\Vec{b'}\rangle)$ in {\fontfamily{cmss}\selectfont Check} holds for an honest prover. The LHS is equal to $L^{x^2}\cdot P\cdot R^{x^{-2}}$, which is equal to
    \[H(\Vec{a}_{[:n']}+x^{-2}\Vec{a}_{[n':]}, x^2\Vec{a}_{[:n']}+\Vec{a}_{[n':]}, x^2\Vec{b}_{[n':]}+\Vec{b}_{[:n']}, \Vec{b}_{[n':]}+x^{-2}\Vec{b}_{[:n']}, \langle\Vec{a'},\Vec{b'}\rangle).\]
    Using the properties of $H$ as defined, the RHS is equal to $H(x^{-1}\Vec{a'}, x\Vec{a'}, x\Vec{b'}, x^{-1}\Vec{b'}, \langle \Vec{a'},\Vec{b'}\rangle)$ as required.
\end{proof}

Bulletproofs also provides the arithmetic circuit protocol. This can be used with the improved inner-product argument to achieve perfect completeness and computational soundness under the discrete logarithm assumption. The security of Bulletproofs is reduced to the hardness of the discrete logarithm problem, which guarantees the perfectly hiding and computationally binding properties of Pedersen commitments.

\begin{definition}[Discrete logarithm problem\cite{dlog}]
    Given a group $G$, a generator $g$ of the group and an element $h$ of $G$, find the discrete logarithm to the base $g$ of $h$ in the group $G$.
\end{definition}

The discrete logarithm problem may or may not be hard. The Discrete logarithm relation that Bulletproofs relies on, however, is a special case when $n\leq 1$ and is considered hard. 
\begin{definition}[Discrete Logarithm Relation \cite{bulletproof}]
    For all PPT adversaries $\mathcal{A}$:
    \begin{equation*}
Pr\begin{bmatrix}[c|c]
   & \mathbb{G} = Setup(1^{\lambda})\\
   \exists a_i \neq 0 \land \Pi^n_{i=1} g_i ^{a_i}=1 & g_1, \cdots, g_n \xleftarrow[]{R} \mathbb{G} \\
   & a_1, \cdots, a_n \in \mathbb{Z}_p \leftarrow \mathcal{A}(G, g_1, \cdots, g_n) \\
\end{bmatrix} \leq \mu(\lambda).
\end{equation*}
\end{definition}

\section{Vampire: polynomial commitments combined with inner-product commitments}
Lipmaa et al. present Vampire \cite{vampire}: a pairing-based updatable and universal zk-SNARK. The key subroutine that Vampire uses for its commitment component is Count, a novel univariate sumcheck argument. Informally, a sumcheck argument is a sumcheck protocol that is used to succinctly prove knowledge of openings for certain commitments \cite{sumcheck}. Commitment schemes that allow this are called sumcheck-friendly. In a univariate sumcheck argument for multiplicative subgroups, given a finite field $\mathbb{F}$ and some multiplicative subgroup $\mathbb{H} \subset \mathbb{F}$, the prover convinces the verifier that the committed polynomial $f \in \mathbb{F}$ sums to the agreed upon value $v \in \mathbb{F}$ over $\mathbb{H}$.

Count improves on the online efficiency of the previously optimal Aurora \cite{aurora}. In Count, the prover only needs to outputs one ILV commitment \cite{ilv} (which is essentially a single group element) instead of two polynomial commitments (two group elements) in Aurora. More formally, the commitment scheme in Count is defined by the following tuple of algorithms. It has three separate  {\fontfamily{cmss}\selectfont Setup} phases called {\fontfamily{cmss}\selectfont PGen, KGen,} and  {\fontfamily{cmss}\selectfont Derive}.
\begin{itemize}
    \item {\fontfamily{cmss}\selectfont \textbf{PGen}$(1^{\lambda}) \rightarrow$ pp=$(p, \mathbb{G}_1, \mathbb{G}_2, \mathbb{G}_T, e, g_1, g_2)$.} On input the security parameter $\lambda$, {\fontfamily{cmss}\selectfont PGen} is a bilinear group generator that returns a prime $p$, three additive cyclic groups $\mathbb{G}_1, \mathbb{G}_2, \mathbb{G}_T$ of order $p$, a non-degenerate efficiently computable bilinear pairing $e: \mathbb{G}_1 \times \mathbb{G}_2 \rightarrow \mathbb{G}_T$, a generator $g_b$ for $\mathbb{G}$.
    
    \item {\fontfamily{cmss}\selectfont \textbf{KGen}(pp$, n_h, d, d_{gap})\rightarrow$ srs.} On input  {\fontfamily{cmss}\selectfont pp}, $n_h=|\mathbb{H}|, d$ the degree of the commitment polynomial $f\in \mathbb{F}[X]$, and $d_{gap}\geq n_h \cdot \lfloor d/n_h \rfloor,$ compute the following.
    \begin{itemize}
        \item $S_1(X) \leftarrow \{(X^i)^{d_{gap}+d}_{i=0:i\neq d_{gap}}\}$
        \item $S_2(X) \leftarrow \{1, X, (X^{d_{gap}-n_hi})^{\lfloor d/{n_h}\rfloor}_{i=0}\}$
        \item $\sigma\xleftarrow[]{R}\mathbb{F}^*$
    \end{itemize} 
    Finally output {\fontfamily{cmss}\selectfont srs} $\leftarrow (p, n_h, d, d_{gap}, [g(\sigma):g \in S_1(X)]_1, [g(\sigma):g \in S_2(X)]_2)$.
    
    \item{\fontfamily{cmss}\selectfont \textbf{Derive}(srs)} $\rightarrow$ {\fontfamily{cmss}\selectfont ek$_\mathcal{R}$, vk$_\mathcal{R}$.} On input {\fontfamily{cmss}\selectfont srs}, compute $S(X) \leftarrow \sigma_{i=0}^{\lfloor d/{n_h}\rfloor}X^{d_{gap}-n_hi}$. Set {\fontfamily{cmss}\selectfont ek$_{\mathcal{R}} \leftarrow $srs} and set {\fontfamily{cmss}\selectfont vk$_{\mathcal{R}} \leftarrow $(srs,$[S(\sigma)]_2, [\sigma^{d_{gap}}]_T$)}. Return ({\fontfamily{cmss}\selectfont ek$_{\mathcal{R}}$}, {\fontfamily{cmss}\selectfont vk$_{\mathcal{R}}$}).

\end{itemize}
Once the preprocessing is done, the prover has ({\fontfamily{cmss}\selectfont ek$_{\mathcal{R}}, ([f(\sigma)]_1, v_f), f)$}, and the verifier has \\({\fontfamily{cmss}\selectfont ek$_{\mathcal{R}}, ([f(\sigma)]_1, v_f)$)}. The prover sends to the verifier the commitment to \[[f_{ipc}(\sigma)]_1=f(X)S(X)-v_f/{n_h}\cdot X^{d_{gap}}.\] In the {\fontfamily{cmss}\selectfont Check} algorithm, the verifier accepts if and only if

\[[f(\sigma)]_1 \bigcdot [S(\sigma)]_2-[f_{ipc}(\sigma)]_1 \bigcdot [1]_2 = v_f/{n_h}\cdot [\sigma^{d_{gap}}]_T \hspace{1cm} (*).\]

Notice that $(*)$ is a direct adaptation of the generic evaluation algorithm in the inner-product commitment scheme  (\hyperref[def:appenA]{Appendix A}) where instead of committing to a vector $\mu$ the prover is committing to a multivariate polynomial $f$.

\subsection{Completeness of Count}
To see why (*) correctly checks the validity of evaluations, we first define a lemma. A full proof for this lemma is present in the original Vampire paper \cite{vampire}.
\begin{lemma} \label{def:vamcorrect}
    We have the following set up.
    \begin{itemize}
        \item $\mathbb{H}$: order $n_h > 1$ multiplicative subgroup of $\mathbb{F}^*$
        \item $d_{gap}$: fixed parameter by Vampire such that $d_{gap}\leq n_h \cdot \lfloor d/n_h \rfloor$ for $d>0$
        \item $f\in PolyPunc(d, d_{gap}, X):=\{f(X)=\sum_{i=0}^{d_{gap}+d}f_iX^i \in \mathbb{F}_{\leq d_{gap}+d}[X]:f_{d_{gap}}=0\}$
        \item $S(X):=\sum_{i=0}^{\lfloor d/n_h\rfloor}X^{d_{gap}-n_h i} \in \mathbb{F}_{\leq d_{gap}}[X]$
    \end{itemize}
    Then, $\sum_{\chi\in \mathbb{H}}f(\chi)=v_f$ and $deg(f)\leq d$ if and only if there exists $f_{ipc}\in PolyPunc_{\mathbb{F}}(d,d_{gap},X)$ such that
    \[f(X)S(X)-f_{idc}(X)=\frac{v_f}{n_h}\cdot X^{d_{gap}}. \hspace{1cm} (\star)\]
\end{lemma}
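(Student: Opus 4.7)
The plan is to reduce the equation $(\star)$ to a statement about the coefficient of $X^{d_{gap}}$ in the product $f(X)S(X)$, and then identify that coefficient with $\frac{1}{n_h}\sum_{\chi \in \mathbb{H}} f(\chi)$ via the standard character sum identity for multiplicative subgroups. This identity, which I would state as the first building block, is: for an order $n_h$ multiplicative subgroup $\mathbb{H} \subset \mathbb{F}^*$ and any integer $j \geq 0$,
\[
\sum_{\chi \in \mathbb{H}} \chi^j = \begin{cases} n_h & \text{if } n_h \mid j, \\ 0 & \text{otherwise.} \end{cases}
\]
Writing $f(X) = \sum_{j} f_j X^j$ and summing termwise over $\mathbb{H}$ then yields $\sum_{\chi \in \mathbb{H}} f(\chi) = n_h \sum_{i \geq 0} f_{n_h i}$, where the second sum runs over those multiples of $n_h$ that fall within the support of $f$.

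Next I would compute the coefficient of $X^{d_{gap}}$ in $f(X)S(X)$ directly. Since $S(X) = \sum_{i=0}^{\lfloor d/n_h \rfloor} X^{d_{gap} - n_h i}$, a monomial $X^j$ in $f$ contributes to $X^{d_{gap}}$ exactly when $j = n_h i$ for some $0 \leq i \leq \lfloor d/n_h \rfloor$. Hence the coefficient of $X^{d_{gap}}$ in $f(X)S(X)$ is precisely $\sum_{i=0}^{\lfloor d/n_h \rfloor} f_{n_h i}$, which by the previous paragraph equals $\frac{1}{n_h}\sum_{\chi \in \mathbb{H}} f(\chi)$ whenever $\deg(f) \leq d$. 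This is the conceptual heart of the argument.

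With these two observations in hand, both directions of the lemma fall out by bookkeeping. For the forward direction, assume $\deg(f) \leq d$ and $\sum_{\chi \in \mathbb{H}} f(\chi) = v_f$; define $f_{ipc}(X) := f(X)S(X) - \frac{v_f}{n_h} X^{d_{gap}}$. The coefficient of $X^{d_{gap}}$ in $f_{ipc}$ vanishes by construction, and $\deg(f_{ipc}) \leq \deg(f) + \deg(S) \leq d + d_{gap}$, so $f_{ipc} \in PolyPunc_{\mathbb{F}}(d, d_{gap}, X)$ and $(\star)$ holds. For the converse, suppose $(\star)$ holds for some $f_{ipc} \in PolyPunc_{\mathbb{F}}(d, d_{gap}, X)$. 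Reading off the coefficient of $X^{d_{gap}}$ on both sides of $(\star)$, and using that $f_{ipc}$ is punctured at $X^{d_{gap}}$, recovers $\sum_{\chi \in \mathbb{H}} f(\chi) = v_f$. The degree bound $\deg(f) \leq d$ then follows by comparing degrees: the LHS $f(X)S(X)$ has degree $\deg(f) + d_{gap}$, while the RHS has degree at most $\max(d + d_{gap}, d_{gap}) = d + d_{gap}$, forcing $\deg(f) \leq d$.

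The main obstacle is a bookkeeping one rather than a conceptual one: making sure the range of $i$ in the product $f(X)S(X)$ exactly matches the index range appearing in $\sum_{\chi \in \mathbb{H}} f(\chi)$ (both must terminate at $\lfloor d/n_h \rfloor$), and ensuring that the $PolyPunc$ membership is preserved in both directions. The condition $d_{gap} \geq n_h \lfloor d/n_h \rfloor$ is what prevents the two truncation thresholds from diverging, and I would call this out explicitly to justify why the coefficient extraction in the second paragraph genuinely captures the entire subgroup sum.
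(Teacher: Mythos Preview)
The paper does not actually prove this lemma; it states it and defers the proof to the original Vampire paper \cite{vampire}. So there is no in-paper argument to compare against, and your proposal stands as a self-contained proof.

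Your argument is correct in substance. The two building blocks --- the character sum identity $\sum_{\chi\in\mathbb{H}}\chi^j = n_h\cdot[\,n_h\mid j\,]$ and the identification of the $X^{d_{gap}}$-coefficient of $f(X)S(X)$ with $\sum_{i=0}^{\lfloor d/n_h\rfloor} f_{n_h i}$ --- are exactly the right ingredients, and both directions follow from them as you describe. One small logical ordering issue in the converse: you extract the sum identity \emph{before} establishing $\deg(f)\le d$, but the step ``coefficient of $X^{d_{gap}}$ in $fS$ equals $\tfrac{1}{n_h}\sum_{\chi}f(\chi)$'' only holds once you know $\deg(f)\le d$ (otherwise there may be additional nonzero $f_{n_h i}$ with $i>\lfloor d/n_h\rfloor$ contributing to the character sum but not to the $X^{d_{gap}}$-coefficient). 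Swap the order --- first obtain $\deg(f)\le d$ from $\deg(fS)=\deg(f)+d_{gap}\le d_{gap}+d$, then read off the coefficient --- and the converse is airtight. Also note that the lemma as stated in this survey has a typo: it writes $d_{gap}\le n_h\lfloor d/n_h\rfloor$, but the correct inequality (as in the \textsf{KGen} description earlier in the section, and as you use) is $d_{gap}\ge n_h\lfloor d/n_h\rfloor$, which is what guarantees $S(X)$ is a genuine polynomial.
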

Then, (*) is a natural manipulation of $(\star)$ and thus completeness follows directly from \hyperref[def:vamcorrect]{Lemma 7.1}.

\subsection{Security assumptions of Vampire}
Count relies on the $(d_1, d_2)-PDL$ (Power Discrete Logarithm) assumption of {\fontfamily{cmss}\selectfont KGen}.
\begin{definition} [$(d_1, d_2)-PDL$ assumption \cite{vampire}]
    Let $d_1(\lambda), d_2(\lambda) \in$ {\fontfamily{cmss}\selectfont Poly}$(\lambda)$. {\fontfamily{cmss}\selectfont KGen} is  $(d_1, d_2)-PDL$ secure if for any non-uniform PPT $\mathcal{A}$, 
    \begin{align*}
        Adv^{PDL}_{d_1, d_2, KeyGen, \mathcal{A}}(\lambda) &:= Pr\bigg[\mathcal{A}\bigg(p,[(x^i)^{d_1}_{i=0}]_1, [(x^i)^{d_2}_{i=0}]_2\bigg) = x \bigg| p \leftarrow \text{{\fontfamily{cmss}\selectfont KGen}}\bigg(1^{\lambda}; x \xleftarrow[]{R} \mathbb{F}*\bigg)\bigg]\\
        &= negl(\lambda).
    \end{align*}
\end{definition}

Count not only reduces the communication cost of Aurora but also gets rid of a low-degree test, which the prover in Aurora uses to convince the verifier of the degree bound of some polynomial $R$ that is part of a linear combination for $f$. 

\section{Signatures of Correct Computation: using polynomial factoring}
Signatures of correct computation (SCC) \cite{scc} is a model to verify dynamic computations by outsourcing the function to evaluate to an untrusted server. An SCC scheme for multivariate polynomial evaluation is equivalent to a publicly verifiable multivariate polynomial commitment scheme. SCC relies on polynomial decomposition, which is an extension of the factoring theorem for univariate polynomials, as defined below.

\begin{definition} [Polynomial decomposition] Let $f(\Vec{x}) \in \mathbb{Z}_p[\Vec{x}]$ be an $n$-variate polynomial. For all $\Vec{a} \in \mathbb{Z}_p^n$, there exist polynomials $q_i(\Vec{x})\in \mathbb{Z}_p[\Vec{x}]$ such that the polynomial $f(\Vec{x})-f(\Vec{a})$ can be expressed as $f(\Vec{x})-f(\Vec{a})=\sum_{i=1}^n(x_i-a_i)q_i(\Vec{x}).$ There exists a polynomial-time algorithm that finds such $q_i(\Vec{x}).$
\end{definition}

$MC_m$ is a commitment scheme built by Lehmkuhl and Chiesa \cite{ryan} based on SCC that is additionally hiding. $MC_m$ commits to polynomials with a single degree bound and is a direct generalization of the polynomial commitment scheme in KZG \cite{kzg}. $MC_m$ is defined by the following tuple of algorithms when committing to an $l$-variate polynomial $p$ with maximum degree bound $D \in \mathbb{N}$. Furthermore, $\mathcal{W}_{l,D}$ is a set of all multisets of $\{1,\cdots,D\}$ with cardinality at most $D$. Notice that polynomial decomposition is used in {\fontfamily{cmss}\selectfont
Open} when the prover (\textit{source} in SCC) produces the evaluation point for the requested opening.

\begin{itemize}
    \item $MC_m$.{\fontfamily{cmss}\selectfont
Setup}$(\lambda, l, D) \rightarrow $ {\fontfamily{cmss}\selectfont
(ck, rk)}. Sample a bilinear group $\langle 
\text{group}\rangle  \leftarrow \text{{\fontfamily{cmss}\selectfont SampleGrp}}(1^\lambda)$. Parse $\langle \text{group}\rangle$ as a tuple $(\mathbb{G}_1, \mathbb{G}_2, \mathbb{G}_T, q, G_1, G_2, e).$ Sample random elements $\beta_a, \cdots, \beta_l$ from $\mathbb{F}_q$. Then, {\fontfamily{cmss}\selectfont ck} $\coloneqq (\langle \text{group} \rangle, \sum)$ and {\fontfamily{cmss}\selectfont rk}$ \coloneqq (D, \langle \text{group}\rangle , \beta_1 G_1, \cdots, \beta_l G_1)$ for \\$\sum=\{(\Pi_{i\in W}\beta_i)G\}_{W\in W_{l,D}} \in \mathbb{G}_1^{D^l}$.
    \item $MC_m$.{\fontfamily{cmss}\selectfont
Commit(ck}, $p)\rightarrow \mathcal{C}.$ Compute the commitment $\mathcal{C}=[c_i]_i^n$ for $c_i \coloneqq p_i(\beta)G.$
    \item $MC_m$.{\fontfamily{cmss}\selectfont
Open(ck, }$p, \Vec{x}, \xi)\rightarrow \pi.$ For an opening challenge $\xi$ and evaluation point $\Vec{x}\in\mathbb{F}_q^l$, compute the linear combination of polynomials $p(\Vec{X}) \coloneqq \sum_{i=1}^n \xi^ip_i(\Vec{X})$. Compute $\Vec{w} \coloneqq [w_j]_{j=1}^l$ such that
\[p(\Vec{X})-p(\Vec{x})=\sum_{j=1}^l(X_j-z_j)w_j(\Vec{X}).\] Compute evaluation proof as $\pi \coloneqq [w_j\beta G]_{j=1}^l.$
    \item $MC_m$.{\fontfamily{cmss}\selectfont
Check(rk,}$\mathcal{C},\Vec{x},v,\pi) \rightarrow \{0,1\}.$ For the opening challenge $\xi$, compute $C \coloneqq \sum_{i=1}^n \xi^i c_i$ and $v \coloneqq \sum_{i=1}^n \xi^i v_i$. Output 1 if the following equality holds and 0 otherwise: 
\[e(C-vG,H)=\Pi_{j=1}^l e(w_j(\beta)G, \beta_jH-z_jH).\]
\end{itemize}

\subsection{Correctness and security of SCC}

Correctness of $MC_m$ follows from polynomial decomposition and the bilinear properties of $e$.

\begin{theorem}
    The commitment scheme $MC_m$ achieves correctness as defined in \hyperref[def:completeness]{Definition 2.2}.
\end{theorem}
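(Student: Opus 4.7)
The plan is to verify the single pairing equation in $MC_m$.\textbf{{\fontfamily{cmss}\selectfont Check}} by unwinding the right-hand side using bilinearity, then substituting the polynomial decomposition identity produced in $MC_m$.\textbf{{\fontfamily{cmss}\selectfont Open}}, and finally matching the result with the left-hand side through the definitions of $C$ and $v$ coming from $MC_m$.\textbf{{\fontfamily{cmss}\selectfont Commit}}. Throughout, the honest prover is assumed, so every $c_i = p_i(\beta)G$ and every $v_i = p_i(\vec{x})$ is computed faithfully.

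First I would rewrite the right-hand side. By the bilinearity of the pairing $e:\mathbb{G}_1\times\mathbb{G}_2\to\mathbb{G}_T$, each factor $e(w_j(\beta)G,\,\beta_jH-z_jH)$ equals $e\bigl((\beta_j-z_j)w_j(\beta)G,\,H\bigr)$, and the product over $j$ collapses into a single pairing of a sum in the first argument:
\begin{equation*}
\prod_{j=1}^{l} e\bigl(w_j(\beta)G,\;\beta_jH-z_jH\bigr) \;=\; e\!\left(\sum_{j=1}^{l}(\beta_j-z_j)\,w_j(\beta)\,G,\; H\right).
\end{equation*}

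Next I would invoke polynomial decomposition. By the identity $p(\vec{X})-p(\vec{x})=\sum_{j=1}^{l}(X_j-z_j)w_j(\vec{X})$ that the prover used to build $\pi$, evaluating at $\vec{X}=\vec{\beta}$ gives $\sum_{j=1}^{l}(\beta_j-z_j)w_j(\beta) = p(\beta)-p(\vec{x})$. Since $p(\vec{X})=\sum_{i=1}^n \xi^i p_i(\vec{X})$, the honest evaluations satisfy $p(\beta)G = \sum_i \xi^i c_i = C$ and $p(\vec{x}) = \sum_i \xi^i v_i = v$, so the argument of the pairing on the right-hand side is exactly $C - vG$. Applying $e(\cdot,H)$ then yields the left-hand side, closing the chain of equalities.

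The main obstacle, such as it is, is a bookkeeping one: making sure that the opening challenge $\xi$ is threaded consistently between the commitment, the claimed value, and the linear combination $p$ built inside \textbf{{\fontfamily{cmss}\selectfont Open}}, so that the algebraic identity $C-vG = (p(\beta)-p(\vec{x}))G$ really does hold with the \emph{same} $p$ that the decomposition was taken of. Once that is spelled out, the proof is a straightforward consequence of bilinearity and the decomposition theorem, with no cryptographic hardness needed.
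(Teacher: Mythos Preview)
Your proposal is correct and follows essentially the same approach as the paper's proof: both arguments chain together bilinearity of $e$, the polynomial decomposition identity evaluated at $\vec{\beta}$, and the definitions $C=\sum_i\xi^i c_i=p(\beta)G$ and $v=\sum_i\xi^i v_i=p(\vec{x})$. The only cosmetic difference is that the paper starts from the left-hand side $e(C-vG,H)$ and unwinds toward the product, whereas you start from the product and collapse toward $e(C-vG,H)$; the underlying chain of equalities is identical.
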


\begin{proof}
    This proof was directly adopted from the original paper with little modification. Let an efficient adversary $\mathcal{A}${\fontfamily{cmss}\selectfont (ck,rk)} choose $l$-variate polyomials $\Vec{p} \coloneqq [p_i]_{i=1}^n \in \mathbb{F}_q$, evaluation point $\Vec{x}$, and opening challenge $\xi$. For $\Vec{v}=\Vec{p}(\Vec{x})$, observe that
    \begin{align*}
        e(C-vG, H) &= e(\sum_{i=1}^n\xi^ic_i-(\sum_{i=1}^n\xi^iv_i)G,H)\\
        &= e((p(\beta)-p(\Vec{x}))G,H) \\
        &= e((\sum_{j=1}^l(\beta_j-x_j)w_i(\beta))G,H) \\
        &= \Pi_{j=1}^le((\beta_j-x_j)(w_j(\beta))G,H) \\
        &= \Pi_{j=1}^l e(w_j, \beta_jH-x_jH)
    \end{align*}
    as required by {\fontfamily{cmss}\selectfont Check} \cite{ryan}.
\end{proof}

The security of SCC relies on the following computational assumption.
\begin{definition}[Bilinear $l$-strong Diffie-Hellman assumption \cite{idk2}]
    Suppose $\lambda$ is the security parameter and let $(p,\mathbb{G}, \mathbb{G}_T,e,g)$ be a uniformly randomly generataed tuple of bilinear pairings parameters. Given the elements $g,g^t,\cdots, g^{t^l} \in \mathbb{G}$ for some $t$ chosen at random from $\mathbb{Z}_p^*$ and for $l=poly(\lambda)$, there is no PPT algorithm that can output the pair $(c,e(g,g)^{1/(t+c)})\in \mathbb{Z}_p^*/\/{-t/}\times \mathbb{G}_T$ except with negligible probability $negl(\lambda).$
\end{definition}

\section{DARK: using groups of unknown order}
Bunz et al. \cite{dark} built a polynomial commitment scheme based on Diophantine Argument of Knowledge (DARK) compilers. We conveniently refer to the resulting polynomial commitment scheme as DARK. DARK does not require a trusted setup and thus is used to build a new transparent proof system called Supersonic.

DARK is an instantiation of an abstract homomorphic commitment function $[\![*]\!]: \mathbb{Z}_p[X] \rightarrow \mathbb{S}$ for some prime $p$ and some set $\mathbb{S}.$ The following tuple of algorithms is a concrete implementation of $[\![*]\!]$. Importantly, the {\fontfamily{cmss}\selectfont Check} algorithm recursively reduces the degree of $f$ from $d$ to $\lfloor \frac{n}{2} \rfloor$. At every recursive step, the verifier holds a commitment $[\![f(X)]\!]$ to $f(X)$, and the prover is bound to the underlying polynomial. The key to DARK is that the order of $\mathbb{G}$ cannot be computed efficiently – hence the name group of unknown order.
\begin{itemize}
    \item {\fontfamily{cmss}\selectfont \textbf{DARK.Setup}} $(1^\lambda) \rightarrow ${\fontfamily{cmss}\selectfont pp}. On input the security parameter $\lambda$, sample $\mathbb{G} \overset{{\scriptscriptstyle\R}}{\leftarrow} GGen(\lambda)$ and $G \overset{{\scriptscriptstyle\R}}{\leftarrow} \mathbb{G},$ output {\fontfamily{cmss}\selectfont pp} as $(\lambda, \mathbb{G}, G, q)$ where $q$ is a large integer.
    \item {\fontfamily{cmss}\selectfont \textbf{DARK.Commit}(pp}, $f(X)) \rightarrow (\mathcal{C}; f(X), \hat{f}(X))$. On input {\fontfamily{cmss}\selectfont pp} and the commitment polynomial $f(X)$, compute and return  $C = \hat{f}(q)\cdot G$, $f(X)$, and $\hat{f(X)}$ is $f(X)$ "lifted" to an integer polynomial with bounded coefficients such that $\hat{f(X)} \mod p = f(X)$.
    \item {\fontfamily{cmss}\selectfont \textbf{DARK.Open}(pp}$, C, f(X), \hat{f}(X)).$ Check that $\hat{f}(X) \in \mathbb{Z}(q/2)[X], \hat{f}(q)\cdot G=C$ and $f(X)=\hat{f}(X) \mod p$.
    \item {\fontfamily{cmss}\selectfont \textbf{DARK.Check}.} The prover computes $C_L = \hat{f}_L(q,q^2,\cdots,q^{(2^{\mu'}-1)})\cdot G$ and \\$C_R=\hat{f}_R(q,q^2,\cdots,q^{(2^{\mu'}-1)})\cdot G$. Then, the verifier outsources the expensive checking of $C_L+q^{d'+1}\cdot C_R=C$ to the prover by using a proof of exponentiation protocol \cite{poe}.
\end{itemize} 

When used with a $\mu$-variate polynomial of degree $d$, the proof size and verification work for DARK is $\mu \log(d)$. That is, the prover and verifier times are logarithmic in size to the number of coefficients in the commitment polynomial.

\subsection{Completeness of DARK}
\begin{theorem}
    The commitment scheme DARK achieves completeness for $\mu$-linear polynomials in $\mathbb{Z}_p[X]$ as defined in \hyperref[def:completeness]{Definition 2.2}.
\end{theorem}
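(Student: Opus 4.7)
The plan is to proceed by induction on the number of variables $\mu$, exploiting the multilinear decomposition of $f$ together with the additive homomorphism of the commitment map $[\![*]\!]$. For the base case $\mu = 0$, the polynomial is a constant and the verifier's check reduces to the direct equation $\hat{f}(q) \cdot G = \mathcal{C}$, which holds by construction of {\fontfamily{cmss}\selectfont DARK.Commit}.

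For the inductive step, I would write $f(X_1, \ldots, X_\mu) = f_L(X_2, \ldots, X_\mu) + X_1 \cdot f_R(X_2, \ldots, X_\mu)$ and lift this decomposition to the integer polynomial $\hat{f}$ with $\hat{f} = \hat{f}_L + X_1 \hat{f}_R$. An honest prover then forms $C_L = \hat{f}_L(q, q^2, \ldots, q^{2^{\mu - 1} - 1}) \cdot G$ and $C_R = \hat{f}_R(q, q^2, \ldots, q^{2^{\mu - 1} - 1}) \cdot G$. The central algebraic step is to verify the homomorphic identity $C_L + q^{d' + 1} \cdot C_R = \mathcal{C}$ by direct expansion: the powers of $q$ on the right-hand side of the $X_1$ slot of $\hat{f}$ match exactly the $q^{d'+1}$ shift, so grouping terms by the $X_1$ exponent recovers $\hat{f}(q, q^2, \ldots) \cdot G = \mathcal{C}$.

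Next, I would appeal to the completeness of the proof-of-exponentiation subprotocol \cite{poe} to conclude that the verifier accepts the outsourced check of $C_L + q^{d' + 1} C_R = \mathcal{C}$. The recursion then continues on a commitment to a $(\mu - 1)$-linear polynomial formed from a challenge-weighted combination of $\hat{f}_L$ and $\hat{f}_R$; applying the inductive hypothesis finishes the argument, and the terminal scalar equals $f(\vec{x})$ in $\mathbb{Z}_p$ because $\hat{f}(X) \bmod p = f(X)$.

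The hard part will be controlling the growth of the coefficients of the lifted polynomial $\hat{f}$ through the recursion: at each step, the new polynomial is a linear combination scaled by a verifier challenge, which can inflate coefficient magnitudes, while the {\fontfamily{cmss}\selectfont DARK.Open} check demands $\hat{f}(X) \in \mathbb{Z}(q/2)[X]$. I would need to argue that the choice of $q$ in {\fontfamily{cmss}\selectfont DARK.Setup}, relative to $p$ and the recursion depth $\log d$, is large enough so that this bound is preserved throughout, and that the base-$q$ expansion remains unambiguous so that $\mathcal{C}$ uniquely determines the reduced polynomial at each level. Once this is in hand, the remaining verifications are routine consequences of bilinearity and the scalar-multiplication structure on $\mathbb{G}$.
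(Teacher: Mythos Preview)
Your proposal is correct and follows essentially the same approach as the paper: both argue by induction on the recursion depth, maintaining as invariants that the commitment relation $C = \hat{f}(\vec{q})\cdot G$, the evaluation congruence $f(\vec{z}) \equiv y \pmod p$, and a coefficient bound $|f| \leq b$ are preserved under the multilinear split $f = f_L + X_i f_R$ and the challenge-folded polynomial $f' = f_L + \alpha f_R$. The ``hard part'' you flag---tracking coefficient growth so that $b < q/2$ throughout---is exactly the reduction the paper makes explicit, with the bound $f' \in \mathbb{Z}(2^{\lambda} b)$ per round.
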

\begin{proof}
    This proof was directly adopted from the original paper with little modification. Proving the completeness of DARK can be reduced to proving that $b<\frac{q}{2}$ and $|f|\leq b.$ We show by induction that for each recursive call of the evaluation protocol, inputs satisfy the following constraints: $f(X_1, \cdots, X_\mu)$ is $\mu$-linear, $C=G^{f(\Vec{q})}, f(X)\in\mathbb{Z}(b)$, and $f(z_1, \cdots, z_\mu)=y \mod p.$ 
    
    In the first iteration of the evaluation protocol, the prover maps the coefficients of a polynomial $\Tilde{f}(X_1, \cdots, X_\mu) \in \mathbb{Z}_p$ to a $\mu$-linear polynomial $f(X_1, \cdots, C_\mu)$ with coefficients in $\mathbb{Z}(p-1).$ 
    In the final iteration of the evaluation protocol, the prover produces $f$, such that $|f|<b$ by construction.

    Then, in the $i$-th iteration of the protocol, the prover computes $(i-1)$-linear polynomials $f_L$ and $f_R$ such that $f_L(X_1, \cdots, X_{i-1})+X_if_R(X_1, \cdots, X_{i-1})=f(X_1, \cdots, X_i).$ It follows that $f(z_1, \cdots, z_i) \mod p=f_L(z_1, \cdots, z_{i-1})+z_if_R(z_1, \cdots, z_{i-1})\mod p=y_L+z_iy_R\mod p=y.$ Finally, the prover can construct an $(i-1)$-linear polynomial $f'=f_L+\alpha f_R \in \mathbb{Z}(2^\lambda b)$. This computes the correct values of $f, y$, and $b$ such that $|f|<b$ \cite{dark}.
    
    This proof can be applied to $\mu$-variate polynomials by setting $f$ as such and adjusting the coefficients space accordingly.
\end{proof}

\subsection{Security assumptions of DARK}
The security of DARK depends on the order, adaptive root, and r-fractional root assumptions defind below.

\begin{definition}[Order assumption \cite{idk}]\label{def:orderassump}
    For a group of unknown order $\mathbb{G} \leftarrow GGen(k)$, the Order assumption holds if for any adversary $\mathcal{A}$
    
    \begin{equation*}
    Pr\begin{bmatrix}[c|c]
       & GGen(k) \rightarrow \mathbb{G} \\
    w \neq 1 \land w^{\alpha}=1 & \mathcal{A}(\mathbb{G})\rightarrow (w,\alpha)\\
        & \text{where }|\alpha|<2^{poly(k)}\in\mathbb{Z}\text{ and }w\in \mathbb{G}\\
    \end{bmatrix} \leq negl(k).
    \end{equation*}
\end{definition}

The order assumption is implied by the adaptive root assumption defined below.

\begin{definition}[Adaptive root assumption \cite{wes}]\label{def:adaptive root}
    For a group of unknown order $\mathbb{G}\leftarrow \text{{\fontfamily{cmss}\selectfont GGen}}$, we say that the Adaptive root assumption holds for \text{{\fontfamily{cmss}\selectfont GGen}} if for any efficient adversaries $\mathcal{A}_0, \mathcal{A}_1$:
    \begin{equation*}
    Pr\begin{bmatrix}[c|c]
        & \text{{\fontfamily{cmss}\selectfont GGen}}(k) \rightarrow \mathbb{G} \\
        & \mathcal{A}_0(\mathbb{G})\rightarrow (w, state) \\
    z^l=w\neq 1& l \xleftarrow[]{R} Primes(k)\\
        & \mathcal{A}_1(l, state) \rightarrow z
    \end{bmatrix} \leq negl(k).
    \end{equation*}
\end{definition}

\begin{definition}[r-Fractional Root Assumption]\label{def:rfrac}
    For a group of unknown order, the r-Fractional Root Assumption holds for \text{{\fontfamily{cmss}\selectfont GGen}} if for any efficient adversary $\mathcal{A}$:
    \begin{equation*}
    Pr\begin{bmatrix}[c|c]
        & \text{{\fontfamily{cmss}\selectfont GGen}}(k) \rightarrow \mathbb{G} \\
        &\mathbb{G}\rightarrow g \\
    u^{\beta}=g^{\alpha}\land\frac{\beta}{gcd(\alpha, \beta)} \neq r^k,k\in \mathbb{N}   & \mathcal{A}(\mathbb{G},g)\rightarrow (\alpha, \beta, u)\\\
        & \text{where } |\alpha|<2^{poly(k)}, \\
        &|\beta|<2^{poly(k)}\in \mathbb{Z} \text{ and } u\in \mathbb{G}
    \end{bmatrix} \leq negl(k).
    \end{equation*}
\end{definition}

\section{Dew: using groups of unknown order}
Arun et al. introduces Dew-PC, a new polynomial commitment scheme with constant size evaluation proofs and logarithm verifier time based on a transparent inner-product commitment scheme. This commitment scheme is ultimately used to compile an information-theoretic proof system and yield Dew \cite{dew}: a transparent (meaning no trusted setup) constant-sized zk-SNARK. Dew-PC is defined by the following tuple of algorithms.

\begin{itemize}
    \item {\fontfamily{cmss}\selectfont \textbf{Setup}$(1^{\lambda}, D) \rightarrow$ pp=$(\lambda, \mathbb{G},g,PRG,p)$.} On input $k$ the security parameter and $D$ the maximum degree of the commitment polynomial, output:
    \begin{itemize}
        \item $\mathbb{G} \leftarrow \text{{\fontfamily{cmss}\selectfont GGen}}(\lambda)$
        \item $g \xleftarrow[]{\$}\mathbb{G}$
        \item A pseudorandom generator PRG: $\{0,1\}^\lambda \rightarrow \mathbb{Z}_p^{2\log(D+1)}$
        \item A large prime $p$ such that $len(p)=poly(\lambda)$
    \end{itemize}
    \item {\fontfamily{cmss}\selectfont \textbf{Commit}(pp,$D,f(X),l-1) \rightarrow (\mathcal{C},\textbf{f})$.} On input public parameter {\fontfamily{cmss}\selectfont pp}, a degree (l-1) polynomial f(X) in $\mathbb{Z}_p[X]$, output:
    \begin{itemize}
        \item The commitment $\mathcal{C} \eqdef g^{\sum{i=0}^{l-1}f_i\alpha^{2i}}$
        \item The coefficient vector \textbf{f} where $f_i$'s are the coefficients of the polynomial
    \end{itemize}
    \item {\fontfamily{cmss}\selectfont \textbf{Open}(pp,$D, f(X), l-1, C, \Tilde{\textbf{f}}) \rightarrow$ pp=$(\lambda, \mathbb{G},g,PRG,p)$.} Output 1 if all of the following checks pass, and 0 otherwise.
    \begin{itemize}
        \item $l-1 \leq D$
        \item $\sum_{i=0}^{l-1} \Tilde{f_i}\alpha^{2i}\in\mathbb{Z}$
        \item $\mathcal{C}=g^{\sum_{i=0}^{l-1}\Tilde{f_i}\alpha^{2i}}$
        \item $\Tilde{\textbf{f}}\in \{\frac{a}{b}|gcd(a,b)=gcd(b,p)=1, 0<b<p^{1+\log l}, |a/b|\leq (2l+1)\alpha\}^l$
        \item $\Tilde{f_i}=f_i \mod p$
    \end{itemize}
    \item {\fontfamily{cmss}\selectfont \textbf{Check}(pp,$D,C,l-1,x,v,f(X)) \rightarrow$ \textbf{logTEST}$(\mathcal{C}, l-1; f(X))$, \textbf{logIPP}$(\mathcal{C}, l-1, x, v;f(X))$}. Return 1 if both subprocesses \textbf{logTEST} and \textbf{logIPP} accept. In both protocols described below, PoKPE is an argument of knowledge for the relation $R_{PoKPE}$ in the generic group model. It is reproduced in \hyperref[def:appenA]{Appendix A}.
    \begin{itemize}
        \item \textbf{logTEST:} Replace the query vector \textbf{z} in \hyperref[def:fig1]{Figure 1} with 
        \[z_k \equiv z_{k_0, \cdots, k_{\log l-1}}:=\Pi^{\log l}_{j=1}x_{j,k_{j-1}}.\]
        Here, $x_1, \cdots, x_{\log l}$ are sampled from $\mathbb{Z}^2_p$ where $x_j=(x_{j,0}, x_{j,1})$, and $(k_0, \cdots, k_{\log l-1})$ is the base-2 representation of $0\leq k\leq l-1$.
        \item \textbf{logIPP:} Replace the query vector \textbf{q} in \hyperref[def:fig2]{Figure 2} with 
        \[q_k:=\Pi_{0\leq j\leq\log l-1}(x^{k_j2^j}\mod p)\]
        where $x\in \mathbb{Z}_p$.
    \end{itemize}
\end{itemize}

\begin{figure}[H]
\begin{center}
\fbox{\includegraphics[width=0.5\paperwidth]{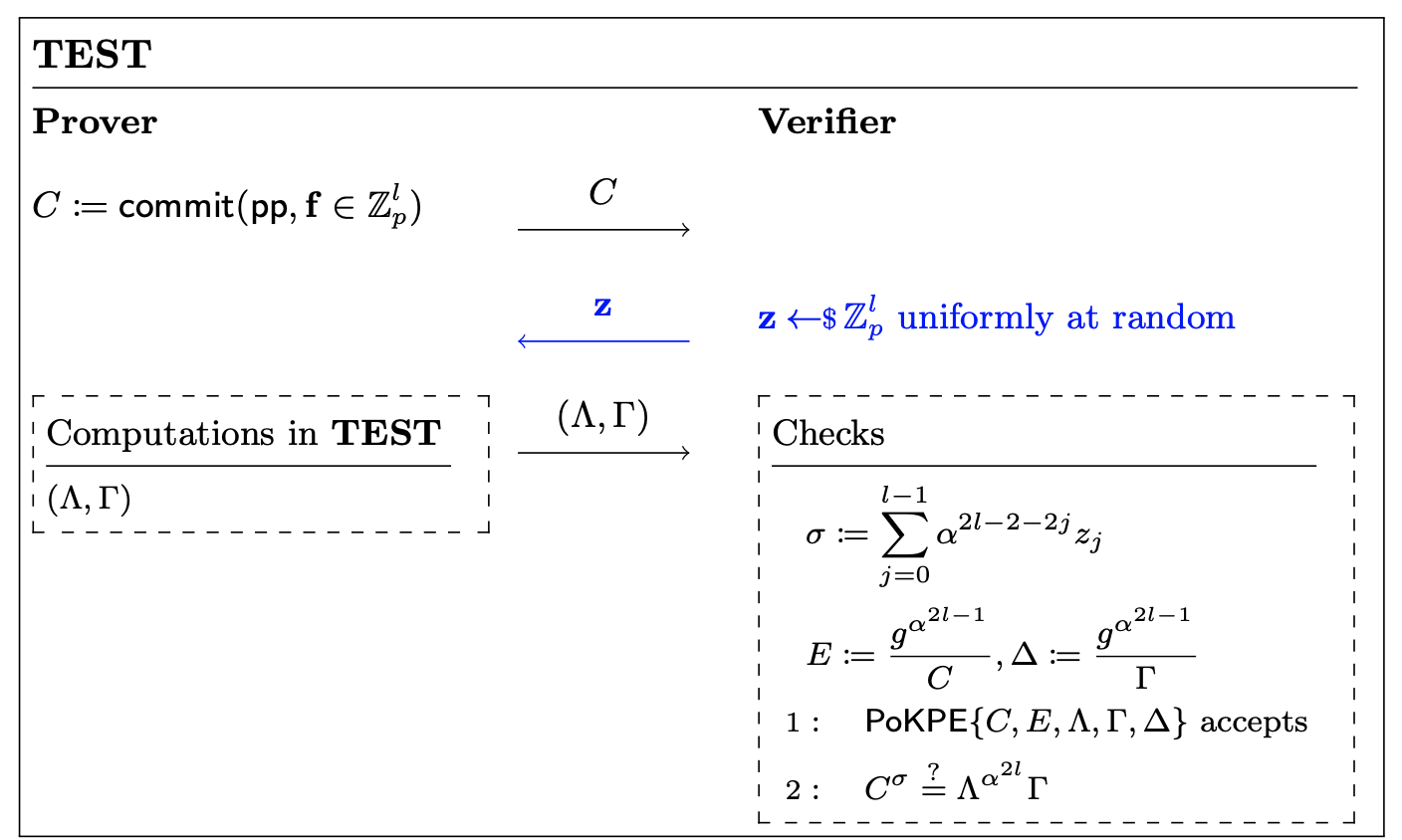}}
\caption{TEST protocol for ICP}
\label{def:fig1}
\end{center}
\end{figure}

\begin{figure}[H]
\begin{center}
\fbox{\includegraphics[width=0.5\paperwidth]{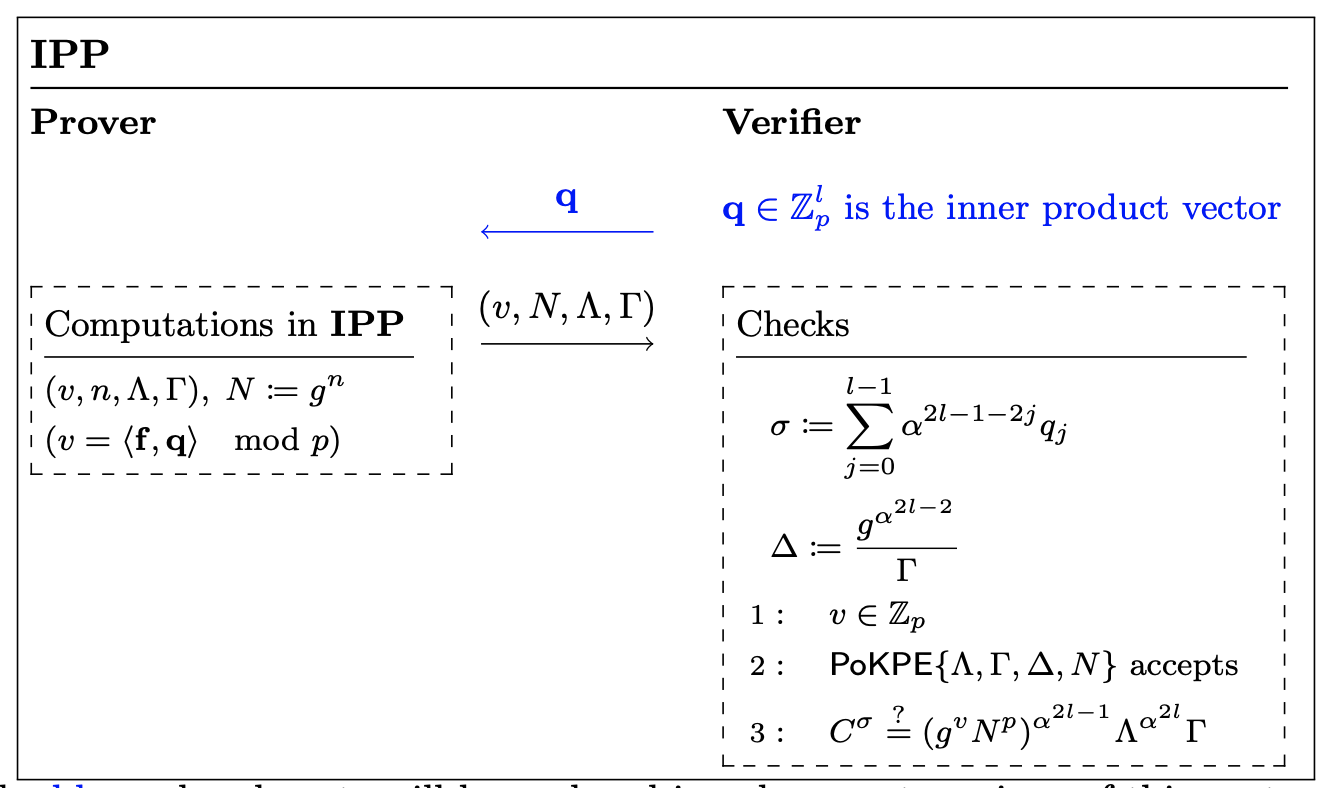}}
\caption{IPP protocol for ICP}
\label{def:fig2}
\end{center}
\end{figure}

The key component of achieveing logarithmic time verifier is to use Kronecker products of log-many length-2 vectors for \textbf{z} in \textbf{TEST} and \textbf{q} in \textbf{IPP} protocols. More details on these protocols may be found on the paper \cite{dew}.

Dew-PC can be made non-interactive using Fiat-Shamir transformation and the resulting scheme is succinct with respect to communication and verifier computation. 
\subsection{Completeness and security of Dew-PC}
\begin{theorem}
    The polynomial commitment scheme Dew-PC satisfies completeness as defined in \hyperref[def:completeness]{Definition 2.2}.
\end{theorem}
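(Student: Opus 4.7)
The plan is to reduce completeness of Dew-PC to the completeness of its two constituent subprotocols, \textbf{logTEST} and \textbf{logIPP}, and then check that the Kronecker-product query vectors make these subprotocols terminate successfully on an honest transcript.

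First I would verify that, for an honestly generated $\mathcal{C}$ and $\mathbf{f}$ produced by \textbf{Commit}, the checks inside \textbf{Open} hold by inspection: $l-1\leq D$ by assumption, the coefficient lift satisfies the modular and bound conditions by how $\tilde f_i$ is constructed from $f_i$, and $\mathcal{C}=g^{\sum_{i=0}^{l-1}\tilde f_i\alpha^{2i}}$ holds by the very definition of \textbf{Commit}. So the nontrivial part of completeness lies entirely inside \textbf{Check}.

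Next I would handle \textbf{logIPP}. The query vector is $q_k=\prod_{0\le j\le\log l-1}(x^{k_j 2^j}\bmod p)$, where $(k_0,\dots,k_{\log l-1})$ is the binary expansion of $k$. Since $\sum_j k_j 2^j = k$, we get $q_k\equiv x^k\pmod p$, so the inner product $\langle\mathbf{f},\mathbf{q}\rangle$ reduces modulo $p$ to $\sum_{i=0}^{l-1}f_i x^i = f(x) = v$, which matches the value the verifier is testing against. Completeness of the base \textbf{IPP} protocol of \hyperref[def:fig2]{Figure 2}, instantiated with this $\mathbf{q}$, then guarantees that the verifier accepts for an honest prover. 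The PoKPE subargument is invoked on a true statement about $\mathcal{C}$, so completeness of PoKPE (stated in the appendix) finishes this branch.

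For \textbf{logTEST}, I would invoke the analogous reasoning: the vector $\mathbf{z}$ is a Kronecker product of $\log l$ length-2 vectors $x_j=(x_{j,0},x_{j,1})\in\mathbb Z_p^2$, so $z_k=\prod_j x_{j,k_{j-1}}$ is a tensor structured challenge. Because this is precisely the format that the base \textbf{TEST} protocol of \hyperref[def:fig1]{Figure 1} consumes, and because the honest commitment $\mathcal{C}$ and the honest coefficient vector $\mathbf{f}$ satisfy the relation that \textbf{TEST} verifies (the exponent in $\mathcal{C}$ is exactly $\sum_i f_i \alpha^{2i}$, with the bounds enforced in \textbf{Open}), completeness of the base \textbf{TEST} protocol and of the inner PoKPE invocation yields acceptance.

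The main obstacle, and the point where I would spend care rather than a routine one-line citation, is the bookkeeping between the $\mathbb{Z}_p$ world where $f(x)=v$ is evaluated and the integer world in which the commitment exponents $\sum_i\tilde f_i\alpha^{2i}$ live. In particular I would check that the fractional bounds $\tilde{\mathbf{f}}\in\{a/b:\dots\}^l$ imposed by \textbf{Open} propagate through the Kronecker-structured linear combinations used inside \textbf{logTEST} and \textbf{logIPP} without overflowing the range required by PoKPE. Granted that, the completeness statements of \textbf{TEST}, \textbf{IPP}, and PoKPE combine to give completeness of Dew-PC in the sense of \hyperref[def:completeness]{Definition 2.2}.
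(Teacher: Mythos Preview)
Your proposal is correct in outline but takes a genuinely different route from the paper. You argue modularly: reduce completeness of \textbf{Check} to completeness of the base \textbf{TEST} and \textbf{IPP} protocols (treated as black boxes) together with completeness of PoKPE, and then observe that the Kronecker-structured query vectors feed these subprotocols correctly---in particular your observation that $q_k\equiv x^k\pmod p$, hence $\langle\mathbf{f},\mathbf{q}\rangle\equiv f(x)\pmod p$, is exactly the semantic content needed for \textbf{logIPP}. The paper, by contrast, does not invoke separate completeness lemmas for \textbf{TEST} and \textbf{IPP}; instead it unrolls the algebra directly, expanding the product $\bigl(\sum_j f_j\alpha^{2j}\bigr)\cdot\bigl(\sum_j \alpha^{2l-2-2j}z_j\bigr)$ (and the analogous product with $q_j$) and matching the result against the output of the CoeffSplit subroutine, thereby verifying that the final check values are $v=0$ for \textbf{logTEST} and $v=f(x)\bmod p$ for \textbf{logIPP}. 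Your approach is cleaner and more conceptual, but it leans on completeness statements for \textbf{TEST}/\textbf{IPP} that the survey does not state or prove separately (they live in the original Dew paper), whereas the survey's computation is self-contained modulo CoeffSplit and PoKPE. One small gap in your write-up: for \textbf{logTEST} you never say what the target value is (it is $v=0$, not an evaluation), so the sentence ``the honest values satisfy the relation that \textbf{TEST} verifies'' would benefit from naming that relation explicitly.
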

\begin{proof}
    This proof was directly adopted from the original paper with little modification. The completeness of Dew-PC relies on the definition of CoeffSplit and completeness of PoKPE. Coeffsplit is a protocol used in TEST and IPP of ICP and is reproduced in \hyperref[def:appenB]{Appendix B}. To show that the last check in \textbf{logTEST} holds, it suffices to show that $v=0$. Observe through direct manipulation that
    \begin{align*}
        &\sum_{j=0}^{l-1}f_j\alpha^{2j}\times \sum_{j=0}^{l-1}\alpha^{2l-2-2j}z_j \\
        &= \alpha^{2l}\bigg(\sum_{j'>j}\alpha^{2(j'-j)-2}f_{j'}z_j\bigg)+\bigg(\sum_{j'<j}\alpha^{2l-2-2(j-j')}f_{j'}z_j+\sum_{j'=j}\alpha^{2l-2}f_jz_j\bigg)\\
        &= \alpha^{2l}\lambda + \gamma.
    \end{align*}
    The $\lambda, \gamma$ in the end are indeed the return values of CoeffSplit and $v=0$ as required. For \textbf{logIPP} it suffices to show that $v=f(x)\mod p$. For the query vector \textbf{q}, observe that
    \begin{align*}
        &\sum_{j=0}^{l-1}f_j\alpha^{2j}\times \sum_{j=0}^{l-1}\alpha^{2l-1-2j}q_j\\
        &= \alpha^{2l-1}\bigg(\sum_{i=0}^{l-1}f_jq_j\mod p+p\bigg\lfloor\frac{\sum_{i=0}^{l-1}}{p}\bigg\rfloor\bigg) \\
        &+\alpha^{2l-2}(0)+\alpha^{2l}\bigg(\sum_{j'>j}\alpha^{2(j'-j)-1}f_{j'}q_j\bigg)+\bigg(\sum_{j'<j}\alpha^{2l-1-2(j-j')}f_{j'}q_j\bigg) \\
        &= \alpha^{2l-1}\bigg(\sum_{i=0}^{l-1}f_jq_j\mod p+p\bigg\lfloor\frac{\sum_{i=0}^{l-1}}{p}\bigg\rfloor\bigg)+\alpha^{2l-2}(0)+\alpha^{2l}\lambda+\gamma.
    \end{align*}
    The values in the end $(v+np,\lambda,\gamma)$ are indeed the output values of CoeffSplit and $v$ is satisfied.
\end{proof}

The security of Dew-PC check algorithm depends on the \hyperref[def:orderassump]{order assumption}, \hyperref[def:adaptive root]{adaptive root assumption}, and \hyperref[def:rfrac]{r-fractional root assumption} just like DARK, under the generic group model.

\section{Orion+: using a CP-SNARK}
Orion+ is a multilinear commitment scheme with linear prover complexity, $O(\mu)$ proof size, and $O(\mu)$ verifier time where for $\mu$-variate polynomials. Orion+ assists Hyperplonk \cite{hyperplonk}, a multilinear-IOP, by reducing the size of the opening proof by x1000 as compared to Orion \cite{orion}. Hyperplonk is an improvement on Plonk \cite{plonk}, a universal fully-succinct zk-SNARK of 2022 that achieved up to 20 times lower prover time than Sonic \cite{sonic}, by relying on boolean hypercubes. Importantly, Orion uses a CP-SNARK called OSNARK and is therefore a CP-SNARK itself.

\begin{definition}[CP-SNARK]
    A Commit-and-Prove SNARK (CP-SNARK) is a SNARK that aims to prove that, for vectors $\Vec{f}\in \mathbb{F^n}$ and $\Vec{g}\in \mathbb{F}^n$ with $m\leq n$ and their respective commitments $C_f$ and $C_g$, the values of $\Vec{f}(I_f)$ are equal to the values of $\Vec{g}(I_g).$ Here, $I_f \subseteq [n]$ and $I_g \subseteq[m].$ 
\end{definition}
Orion+ is defined by the following tuple of algorithms.
\begin{itemize}
    \item {\fontfamily{cmss}\selectfont \textbf{Setup}$(1^\lambda, \mu^*)\rightarrow gp(gp_o, gp_{pc},vp_o,pp_o).$} On input security parameter $\lambda$, upper bound $\mu^*$ on the number of variables.
    \begin{itemize}
        \item $m^* \leftarrow$ Set such that running time of {\fontfamily{cmss}\selectfont OSNARK} and {\fontfamily{cmss}\selectfont PC} is $O_\lambda(2^{\mu^*})$ for circuit size $m^*$.
        \item {\fontfamily{cmss}\selectfont $gp_o \leftarrow $ }{\fontfamily{cmss}\selectfont OSNARK.Setup}$(1^{\lambda},m^*)$
        \item {\fontfamily{cmss}\selectfont $gp_{pc} \leftarrow $} {\fontfamily{cmss}\selectfont PC.Setup}$(1^{\lambda},m^*)$
        \item {\fontfamily{cmss}\selectfont $vp_o, pp_o \leftarrow $ }run the indexing phase of {\fontfamily{cmss}\selectfont OSNARK}
    \end{itemize}
    \item {\fontfamily{cmss}\selectfont \textbf{Commit}(gp,$f) \rightarrow \mathcal{C}_f$.} On input polynomial $f \in \mathcal{F}_\mu^{(\leq 1)}$ with $\Vec{w}=(f_{\langle 0\rangle}, \cdots, f_{\langle n-1 \rangle})$ define the following.
    \begin{itemize}
        \item $m=n/k$
        \item $\Vec{w} \in \mathbb{F}^{k\times m}$
        \item Matrix $W \in \mathbb{F}^{k\times M}$ such that $W[i,:]=E(\Vec{w}[i,:])\forall i\in [k]$ for $E: \mathbb{F}^m \rightarrow \mathbb{F}^M$ the linear encoding.
        \item The hash commitment $h_j \leftarrow HCom(W[:,j])$ for each $j\in [M]$, where $W[:,j]$ is the $j$-th column of $W$.
        \item $p_h$ is the polynomial that interpolates vector $(h_j)_{j\in[M]}$
    \end{itemize} Output the commitment $\mathcal{C}_f \leftarrow$ {\fontfamily{cmss}\selectfont PC.Commit}$(gp_{pc}, p_h)$.
    \item {\fontfamily{cmss}\selectfont \textbf{Open}(gp,$\mathcal{C}_f,f)$}. Run {\fontfamily{cmss}\selectfont Commit} and see if the output is consistent with the input $\mathcal{C}_f$.
    \item {\fontfamily{cmss}\selectfont \textbf{Check}(gp;$\mathcal{C}_f, \Vec{z}, y;f) \rightarrow \{0,1\}.$} On input public parameters gp, point \textbf{z}$\in\mathbb{F}^{\mu}$ and commitment $\mathcal{C}_f$, transform \textbf{z} to vectors $\Vec{t_0} \in \mathbb{F}^k$ and $\Vec{t_1} \in \mathbb{F}^m$ such that f(\textbf{z})=$\langle \Vec{w}, \Vec{t_0}\otimes\Vec{t_1}\rangle$. The prover and the verifier run the following protocol and return 1 on completeness, taken directly from the original paper.
        
    \begin{figure}[H]
        \centering
        \fbox{\includegraphics[width=0.6\paperwidth]{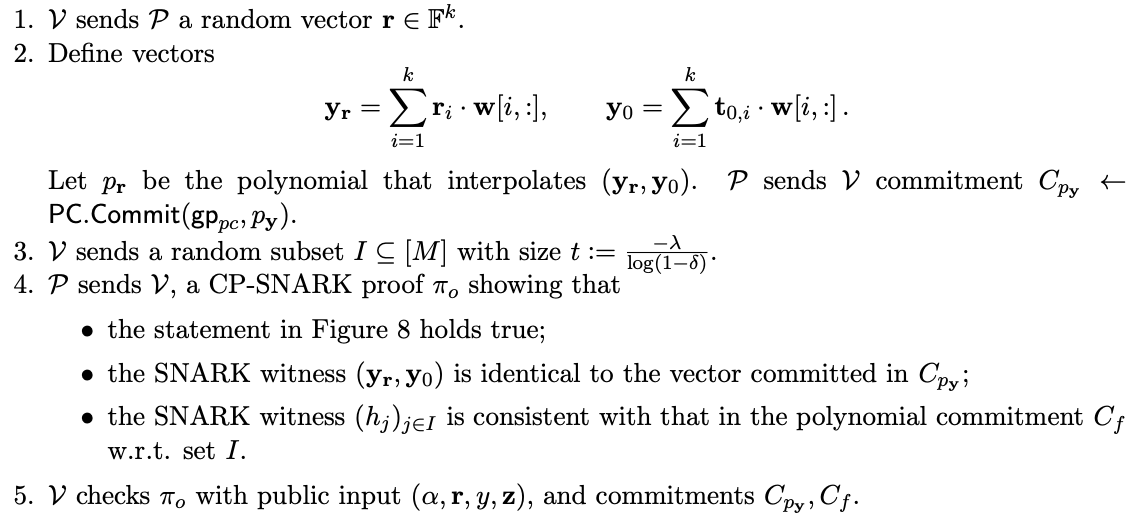}}
        \caption{The interactive sub-protocol for {\fontfamily{cmss}\selectfont Check} \cite{hyperplonk}}
    \end{figure}
\end{itemize}
Observe in the {\fontfamily{cmss}\selectfont Check} function that Orion+ uses OSNARK to commit $(\Vec{y_r}, \Vec{y_0})$ to a multilinear commitment $\mathcal{C}$ and build a CP-SNARK proof for their relation, without having to directly check their consistency. This helps Orion+ reduce prover time.

\subsection{Completeness and security of Orion+}
Orion+ is correct from inspection. Orion+ is made possibly quantum-resistant by not relying on the discrete log problem or bilinear pairing. The security of Orion+ instead relies on that of \hyperref[def:orion security]{Orion}, defined in the random oracle model, and OSNARK. Additionally, Orion+ is binding and sound.
 \pagebreak
\section{Appendix A}\label{def:appenA}
This is a generic tuple of algorithms for an inner-product commitment scheme, as presented in Vampire \cite{vampire}.
\begin{itemize}
    \item {\fontfamily{cmss}\selectfont Setup$(1^{\lambda}) \rightarrow$ PGen$(1^{\lambda}).$}
    \item {\fontfamily{cmss}\selectfont KeyGen(p, $N) \rightarrow$ (ck, tk).} On input the public parameter {\fontfamily{cmss}\selectfont p} and a vector dimension $N$, output
    \begin{itemize}
        \item {\fontfamily{cmss}\selectfont ck} $\leftarrow ([(\sigma^i)^{2N}_{i=0:i \neq N}]_1, [(\sigma^i)^N_{i=0}]_2)$
        \item {\fontfamily{cmss}\selectfont tk} $\leftarrow \sigma \xleftarrow[]{R} \mathbb{Z}_p^*$
    \end{itemize}
    \item {\fontfamily{cmss}\selectfont Commit(ck, $\mu) \rightarrow [\mu(\sigma)_1.]$} On input the commitment key {\fontfamily{cmss}\selectfont ck} and the vector $\mu$ that the prover wishes to commit to, return the commitment $[\mu(\sigma)_1]=\sum^N_{j=1}\mu_j[\sigma^j]_1$ where $\mu(X) \leftarrow \sum_{j=1}^N\mu_j X^j$.
    \item {\fontfamily{cmss}\selectfont Open (ck, $[\mu(\sigma)]_1, \mu, \Vec{v}) \rightarrow (v, [\mu_{ipc}(\sigma)]_1$.} On input the commitment key {\fontfamily{cmss}\selectfont ck}, the original vector $\mu$, another vector $\Vec{v}$, compute
    \begin{itemize}
        \item $v \leftarrow \mu^T\Vec{v}$
        \item $v^*(X) \leftarrow \sum^N_{j=1}\Vec{v_j}X^{N+1-j}$
        \item $\mu_{ipc}(X) \leftarrow \mu(X)\Vec{v}^*(X)-vX^{N+1}$
        \item $[\mu_{ipc}(\sigma)]_1 \leftarrow \sum^{2N}_{i=1, i\neq N+1}\mu_{ipc}[\sigma^i]_1$
    \end{itemize}
    Finally output $(v, [\mu_{ipc}(\sigma)]_1)$.
    \item {\fontfamily{cmss}\selectfont Check(ck, $[\mu(\sigma)]_1, \Vec{v}, v, [\mu_{ipc}(\sigma)]_1)\rightarrow \{0,1\}$.} Output 1 if 
    \[[\mu_{ipc}(\sigma)]_1 \bigcdot [1]_2 = [\mu(\sigma)]_1 \bigcdot \sum^N_{j=1}\Vec{v_j}[\sigma^{N+1-j}]_2-v[\sigma^N]_1\bigcdot [\sigma]_2\]
    and 0 otherwise.
\end{itemize}

\section{Appendix B}\label{def:appenB}
The PoKPE protocol as used in Dew. 
\begin{figure}[H]
    \centering
    \fbox{\includegraphics[width=0.5\paperwidth]{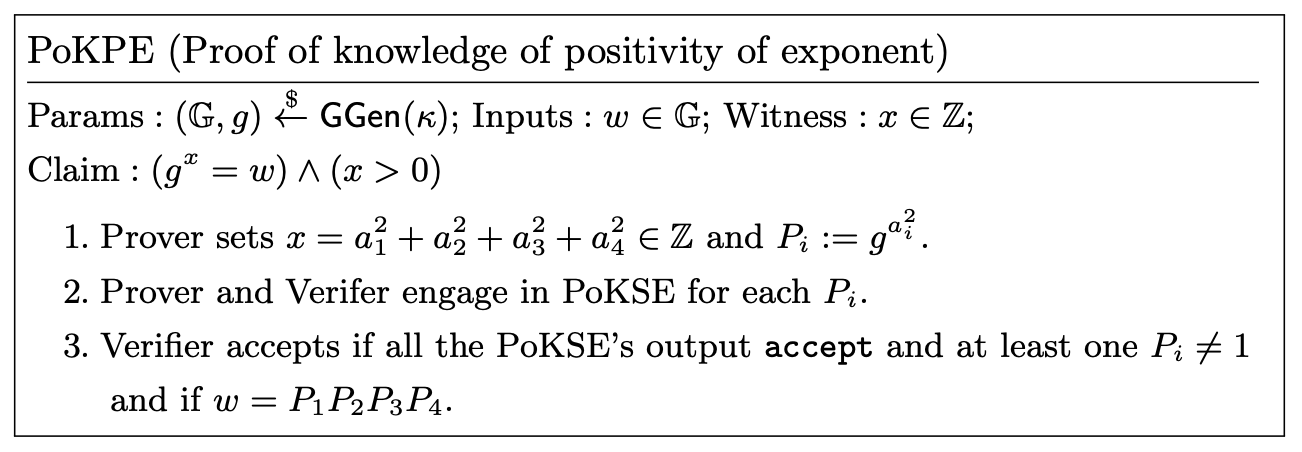}}
    \caption{Protocol PoKPE \cite{dew}}
\end{figure}

The CoeffSplit algorithm is defined as follows in Dew.
\begin{figure}[H]
    \centering
    \fbox{\includegraphics[width=0.5\paperwidth]{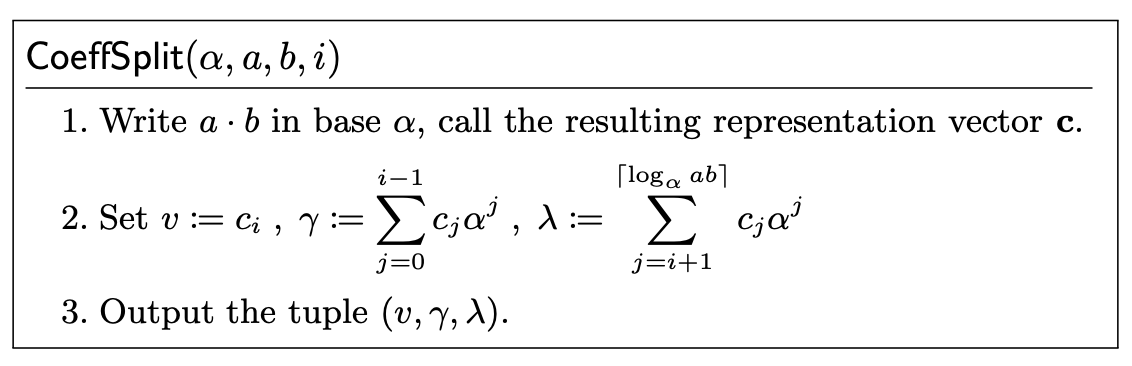}}
    \caption{CoeffSplit algorithm \cite{dew}}
\end{figure}

\bibliographystyle{IEEEtran}
\pagebreak

\bibliography{ref}

\end{document}